\documentclass[a4paper,reqno,11pt,oneside]{amsart}

\usepackage[utf8]{inputenc}
\usepackage[T1]{fontenc}
\usepackage{amsmath,amssymb,amsthm}
\usepackage{graphicx}
\usepackage{booktabs}
\usepackage{algorithm}
\usepackage{algpseudocode}
\usepackage{pgfplots}
\pgfplotsset{compat=1.18}
\usepackage[margin=1in]{geometry}
\usepackage{hyperref}
\usepackage[numbers]{natbib}
\usepackage{subfigure}
\usepackage{comment}
\usepackage[foot]{amsaddr}
\usepackage{placeins}

\newcommand{\Z}{\mathbb{Z}}

\newcommand{\E}{\mathbb{E}}

\newcommand{\lc}{\lambda_c}

\newcommand{\ind}{\mathbf{1}}
\newcommand{\CRS}{\textsc{CRS}}
\newcommand{\MT}{\textsc{MT}}
\newcommand{\MTo}{\textsc{MT}\ensuremath{_{0}}}

\newtheorem{conjecture}{Conjecture}
\newtheorem{theorem}{Theorem}

\algrenewcommand\algorithmicrequire{\textbf{Input:}}
\algrenewcommand\algorithmicensure{\textbf{Output:}}

\usepackage{tikz-network}
\usetikzlibrary{arrows.meta,decorations.pathmorphing,backgrounds,automata,trees,shapes}

\title[Stochastic rumors on the one-dimensional integer lattice]{Critical behavior of stochastic rumors\\ on the one-dimensional integer lattice}

\author{Nevena Mari\'c$^{1}$}
\thanks{$^*$Corresponding author.} 
\address{$^1$ Nevena Mari\'c and Aleksandar Mini\'c: School of Computing, Union University, Belgrade, Serbia. E-mail address: \url{nmaric@raf.rs}, \url{ aminic5317rn@raf.rs}}

\author{Aleksandar Mini\'c$^1$}

\author{Pablo M.\ Rodriguez$^{2,*}$}
\address{$^2$ Pablo M. Rodriguez: Center for Exact and Natural Sciences, Federal University of Pernambuco, Recife, PE, Brazil. E-mail address: \url{pablo@de.ufpe.br}}

\date{\today}

\begin{document}

\keywords{Multi-agent systems, Rumor spreading, Lattice, Criticality, Phase-transition} 

\begin{abstract}
We study rumor-spreading models on the one-dimensional integer lattice, with particular emphasis on the effect of forgetting mechanisms on the long-term behavior of the rumor. We consider the spatial version of the classical Maki--Thompson model together with two variants that incorporate forgetting. Each model is formulated as an interacting particle system and describes the propagation of information through local interactions between neighboring individuals. Our results show that the introduction of forgetting fundamentally changes the behavior of the process. In the classical Maki--Thompson model, the rumor becomes extinct, whereas the two models with forgetting exhibit nontrivial phase transitions between extinction and survival. More precisely, we establish the existence of critical transmission rates that separate regimes in which the rumor dies out almost surely from regimes in which it survives with positive probability. We also investigate the dependence of these critical thresholds on the stifling rate. Combining rigorous arguments with numerical simulations, we prove the existence of the phase transitions, obtain estimates for the critical values, and explore the corresponding phase diagrams. These findings show that forgetting, despite acting as a mechanism that removes spreaders, can qualitatively alter the dynamics of rumor propagation by creating regimes of sustained information spread. Our work provides new insight into the critical behavior of spatial rumor models and raises several questions for future research.
\end{abstract}

\maketitle

\section{Introduction}\label{sec:intro}

\subsection{Rumor spreading on graphs}

The mathematical and computational modeling of rumor spreading has attracted considerable attention, as evidenced by the variety of approaches and open problems in the area, \cite{FerrazDeArruda2023,ghosh2023,mou2025,xia2024,yang2017}. The literature is extensive and encompasses several research directions, including the estimation of rumor duration and the study of phase-transition phenomena arising from the dynamical behavior of the underlying processes, \cite{CRS2012,EF2023,Puerres2025,Svensson1993}.

Among the first mathematical models of spreading rumors are those introduced in \cite{DaleyKendall1965,MakiThompson1973}, which provide a framework for describing the dissemination of information in a population. These models are epidemic-like in nature and can be viewed as Markov processes on graphs. The model introduced by \cite{MakiThompson1973}, commonly referred to as the Maki--Thompson rumor model, represents the population by a complete graph whose vertices correspond to individuals and whose edges indicate potential interactions through which information may be transmitted. Each individual is assumed to be in one of three states: ignorant, spreader, or stifler. Ignorants are individuals who do not know the information, spreaders are individuals who know the information and actively disseminate it, and stiflers are individuals who know the information but no longer participate in its propagation.

In the Maki--Thompson model, the rumor spreads through contacts initiated by spreaders. When a spreader contacts an ignorant individual, the information is transmitted, and the ignorant becomes a spreader. When a spreader contacts an already informed individual, either a spreader or a stifler, the initiating spreader becomes a stifler due to a loss of interest in further spreading the rumor. More generally, when the model is defined on an arbitrary graph, each spreader may contact only its neighboring vertices, and the resulting dynamics are naturally described as an interacting particle system. Figure \ref{fig:MT} illustrates the possible transitions of the model.

\begin{figure}
    \centering
\begin{tikzpicture}

\node at (-2.5,0) {(a)};

\node at (0,0) {
\begin{tikzpicture}
\node at (1,-0.5) {\footnotesize $-1$};
\node at (2,-0.5) {\footnotesize $0$};
\node at (3,-0.5) {\footnotesize $1$};
\draw[dashed, ->] (2,0.3) to [out=90, in=135] (2.9,0.3); 
    \draw[dotted] (0.3,0) -- (0.5,0);
        \draw[dotted] (3.5,0) -- (3.7,0);
    \draw (0.6,0) -- (3.4,0);
    \filldraw [blue!30] (1,0) circle (7pt);
\node[white] at (1,0) {\footnotesize $0$};
    \filldraw [red!80] (2,0) circle (7pt);
    \node[white] at (2,0) {\footnotesize $1$};
    \filldraw [blue!30] (3,0) circle (7pt);
    \node[white] at (3,0) {\footnotesize $0$};
\end{tikzpicture}};

\draw[-{Stealth[length=2mm]}] (2.7,0.05) to (3.3, 0.05);

\node at (6,0) {
\begin{tikzpicture}
\node at (1,-0.5) {\footnotesize $-1$};
\node at (2,-0.5) {\footnotesize $0$};
\node at (3,-0.5) {\footnotesize $1$};
\draw[dashed, white, ->] (2,0.3) to [out=90, in=135] (2.9,0.3); 
    \draw[dotted] (0.3,0) -- (0.5,0);
        \draw[dotted] (3.5,0) -- (3.7,0);
    \draw (0.6,0) -- (3.4,0);
    \filldraw [blue!30] (1,0) circle (7pt);
\node[white] at (1,0) {\footnotesize $0$};
    \filldraw [red!80] (2,0) circle (7pt);
    \node[white] at (2,0) {\footnotesize $1$};
    \filldraw [red!80] (3,0) circle (7pt);
    \node[white] at (3,0) {\footnotesize $1$};
\end{tikzpicture}};

\node at (-2.5,-2) {(b)};
\node at (0,-2) {
\begin{tikzpicture}
\node at (1,-0.5) {\footnotesize $-1$};
\node at (2,-0.5) {\footnotesize $0$};
\node at (3,-0.5) {\footnotesize $1$};
\draw[dashed,  <-] (2.1,0.3) to [out=45, in=90] (3,0.3); 
    \draw[dotted] (0.3,0) -- (0.5,0);
        \draw[dotted] (3.5,0) -- (3.7,0);
    \draw (0.6,0) -- (3.4,0);
    \filldraw [blue!30] (1,0) circle (7pt);
\node[white] at (1,0) {\footnotesize $0$};
    \filldraw [red!80] (2,0) circle (7pt);
    \node[white] at (2,0) {\footnotesize $1$};
    \filldraw [red!80] (3,0) circle (7pt);
    \node[white] at (3,0) {\footnotesize $1$};
\end{tikzpicture}};

\draw[-{Stealth[length=2mm]}] (2.7,-1.95) to (3.3, -1.95);

\node at (6,-2) {
\begin{tikzpicture}
\node at (1,-0.5) {\footnotesize $-1$};
\node at (2,-0.5) {\footnotesize $0$};
\node at (3,-0.5) {\footnotesize $1$};
\draw[dashed, white, ->] (2,0.3) to [out=90, in=135] (2.9,0.3); 
    \draw[dotted] (0.3,0) -- (0.5,0);
        \draw[dotted] (3.5,0) -- (3.7,0);
    \draw (0.6,0) -- (3.4,0);
    \filldraw [blue!30] (1,0) circle (7pt);
\node[white] at (1,0) {\footnotesize $0$};
    \filldraw [red!80] (2,0) circle (7pt);
    \node[white] at (2,0) {\footnotesize $1$};
    \filldraw [orange] (3,0) circle (7pt);
    \node[white] at (3,0) {\footnotesize $2$};
\end{tikzpicture}};

\node at (-2.5,-4) {(c)};
\node at (0,-4) {
\begin{tikzpicture}
\node at (1,-0.5) {\footnotesize $-1$};
\node at (2,-0.5) {\footnotesize $0$};
\node at (3,-0.5) {\footnotesize $1$};
\draw[dashed, ->] (2,0.3) to [out=90, in=135] (2.9,0.3); 
    \draw[dotted] (0.3,0) -- (0.5,0);
        \draw[dotted] (3.5,0) -- (3.7,0);
    \draw (0.6,0) -- (3.4,0);
    \filldraw [blue!30] (1,0) circle (7pt);
\node[white] at (1,0) {\footnotesize $0$};
    \filldraw [red!80] (2,0) circle (7pt);
    \node[white] at (2,0) {\footnotesize $1$};
    \filldraw [orange] (3,0) circle (7pt);
    \node[white] at (3,0) {\footnotesize $2$};
\end{tikzpicture}};

\draw[-{Stealth[length=2mm]}] (2.7,-3.95) to (3.3, -3.95);

\node at (6,-4) {
\begin{tikzpicture}
\node at (1,-0.5) {\footnotesize $-1$};
\node at (2,-0.5) {\footnotesize $0$};
\node at (3,-0.5) {\footnotesize $1$};
\draw[dashed, white, ->] (2,0.3) to [out=90, in=135] (2.9,0.3); 
    \draw[dotted] (0.3,0) -- (0.5,0);
        \draw[dotted] (3.5,0) -- (3.7,0);
    \draw (0.6,0) -- (3.4,0);
    \filldraw [blue!30] (1,0) circle (7pt);
\node[white] at (1,0) {\footnotesize $0$};
    \filldraw [orange] (2,0) circle (7pt);
    \node[white] at (2,0) {\footnotesize $2$};
    \filldraw [orange] (3,0) circle (7pt);
    \node[white] at (3,0) {\footnotesize $2$};
\end{tikzpicture}};

\draw[dashed] (-2,1) to (8,1);

\node at (3,2) {

\begin{tikzpicture}
\filldraw [blue!30] (0,0) circle (7pt);
\node[white] at (0,0) {\footnotesize $0$};
\node at (1,0) {\footnotesize -\,ignorant};

\filldraw [red!80] (3,0) circle (7pt);
\node[white] at (3,0) {\footnotesize $1$};
\node at (4,0) {\footnotesize -\,spreader};

\filldraw [orange] (6,0) circle (7pt);
\node[white] at (6,0) {\footnotesize $2$};
\node at (6.8,0) {\footnotesize -\,stifler};
\end{tikzpicture}
};

\end{tikzpicture}

    \caption{Possible transitions of the Maki--Thompson model on $\mathbb{Z}$. Each vertex can be in one of three states: ignorant, spreader, or stifler, represented by $0$, $1$, and $2$, respectively, in the stochastic process. At time $0$, it is assumed that there is a single spreader located at vertex $0$, while all other vertices are ignorants. (a) The initial spreader contacts one of its two neighbors and successfully transmits the rumor. (b)–(c) The rumor continues to evolve. Whenever a spreader contacts a non-ignorant neighbor, it becomes a stifler.}
    \label{fig:MT}
\end{figure}

One of the main quantities studied in rumor models is the duration of the rumor; see, for example, \cite{Svensson1993}. Another important quantity is the asymptotic size of the rumor, which measures the proportion of individuals reached by the information by the end of the process; see \cite{LMR2011,Rada2021} for an overview of recent results. These studies are conducted under the assumption that the population is homogeneously mixed.

A second line of research considers the spread of rumors in heterogeneous populations, such as random graphs and complex networks. In this setting, the main objectives are the identification of relevant quantities and the study of phase-transition phenomena associated with the survival or extinction of the rumor. Most results in this direction rely on computational simulations and mean-field approximation techniques. See, for example, \cite{Moreno2004,Nekovee2007} and the references therein for further discussion.

An intermediate setting between these two approaches is provided by regular graphs, particularly lattices. Studying rumor propagation on such structures allows one to obtain rigorous results while retaining some of the spatial features present in heterogeneous networks. Significant progress has been made in the study of rumor models on (random) trees; see \cite{junior-rodriguez-speroto2020,junior-rodriguez-speroto2021}. More recently, these results have been extended to inhomogeneous trees; see \cite{Puerres2025}. Trees play a central role in this context, because they arise locally in many random graph models.

The spread of rumors on other regular lattices has received comparatively less attention. In particular, the only work known to us that studies rumor models of the type introduced in \cite{MakiThompson1973} on the $d$-dimensional hypercubic lattice is \cite{CRS2012}. The analysis of rumor processes on these structures is considerably more challenging due to the presence of cycles, which induce dependencies between different propagation paths. Nevertheless, \cite{CRS2012} establishes conditions under which the rumor becomes extinct or survives with positive probability. While this work provides a broad description of the model's behavior in arbitrary dimension, it does not address the critical thresholds or the phase diagram of the process.

The present work contributes to this line of research by studying three rumor models on the one-dimensional integer lattice. Our results provide a more detailed understanding of their critical behavior and raise several questions for future investigation.

\subsection{Stochastic rumors on the one-dimensional integer lattice}

A central interest in the study of interacting particle systems is the identification of phase transitions and the localization of critical parameters that separate qualitatively different dynamical regimes. Determining such thresholds is often a difficult problem on general graphs, even for well-studied models. One-dimensional structures therefore occupy a distinguished place in theory, as they frequently provide the simplest setting in which rigorous results can be established. Classical examples include the contact process \cite{durrett1980,durrett1988,Liggett1999}, the voter model \cite{clifford1973,holley1975,Liggett1999}, and related growth and spreading models. Studies on one-dimensional lattices have led to a detailed understanding of survival and extinction phenomena, invariant measures, convergence properties, and critical behavior, thereby revealing fundamental mechanisms that are often difficult to isolate in more complex geometries.

Beyond their intrinsic mathematical interest, one-dimensional results frequently serve as a guide for the investigation of the same processes on more general graphs. Exact computations and sharp estimates for critical parameters provide valuable benchmarks against which results on higher-dimensional lattices, trees, random graphs, and complex networks may be compared. Such findings often suggest conjectures, indicate the appropriate scaling behavior, and help identify the structural features governing the emergence of distinct phases. Even when quantitative predictions do not directly extend to other graph classes, the qualitative insights obtained from one-dimensional systems often remain relevant. Consequently, the rigorous analysis of phase transitions in one-dimensional interacting particle systems constitutes an important step toward understanding the behavior of stochastic spatial models on general networks.

Rumor spreading processes provide a notable example of this phenomenon. A large body of work has been devoted to rumor models on one-dimensional structures, particularly on the integer lattice, yielding a rich collection of results concerning survival, extinction, propagation speed, and asymptotic behavior; see, for instance, \cite{EF2023,EF2020,EFM2015,roy2024} and the references therein. Although the present work focuses on a different setting, the techniques developed here suggest a broader applicability. In particular, with additional technical effort, they could potentially be adapted to investigate analogous questions for rumor processes on higher-dimensional lattices, including the $d$-dimensional hypercubic lattice, as well as on other important graph classes such as regular trees and $k$-regular ring lattices. Thus, our results contribute to the ongoing effort to understand how geometric and topological features of a network influence the critical behavior and long-term dynamics of information spreading processes.

\subsection{Main contributions and organization of the paper}

We study three spatial stochastic rumor models on the one-dimensional integer lattice $\mathbb{Z}$: the classical Maki--Thompson model (\MTo), with permanent stiflers and no forgetting, and two variants that incorporate a forgetting mechanism for both spreaders and stiflers. The first is the Coletti--Rodriguez--Schinazi model (\CRS), proposed by \cite{CRS2012}, in which spreaders become stiflers only through contact with other spreaders. The second is a more natural variant of the classical model, which we call the spatial Maki--Thompson model (\MT), where both spreaders and stiflers induce stifling. For the three models we assume that transmissions occur at rate $\lambda$. In the \CRS{} and \MT{} models, stifling events occur at rate $\alpha$, while in the \MTo{} model the parameter $\alpha$ is absorbed into the time scale.

We prove that both \CRS{} and \MT{} models have a positive, finite critical transmission rate, and give numerical evidence that the \CRS{} critical line lies below the \MT{} one.  In addition, using an exact continuous-time simulation (Gillespie algorithm) with dynamically expanding lattices, we estimate the critical transmission rate $\lc(\alpha)$ as a function of the stifling rate $\alpha$ for each of these two models.

For the \CRS{} model we cover $\alpha\in[0,20]$, spanning estimated critical values from $\lc \approx 1.65$ (the contact-process limit) to $\lc\approx 2750$, and find that, for large $\alpha$, the critical curve is well described by the quadratic law $\lc(\alpha)\approx 1.65 + 7\alpha^2$. We give a heuristic argument for the quadratic scaling based on mutual parent--child stifling at the rumor front. For the \MT{} variant our simulations are consistent with a phase transition at every $\alpha\in[0,1)$, with critical rate well fit by $\lc(\alpha)\approx 1.65 + 2.8\,\alpha/(1-\alpha)^{2.65}$, growing steeply as $\alpha\uparrow 1$. 

For the classical \MTo{} model, we prove that the rumor dies out for every $\lambda$. We then illustrate this theoretical result through simulations, which consistently exhibit extinction across the entire range of $\lambda$ considered. Moreover, the mean total range satisfies $\E[N]=2(\lambda+1) \approx 2\lambda$, revealing a clean linear law consistent with the inescapable effect of the permanent-stifler wake in one dimension.

The paper is organized as follows. Section 2 contains the theoretical analysis of the models. After discussing their mean-field (deterministic) counterparts, we present the main results concerning extinction for the \MTo{} model and the existence of phase transitions for the \CRS{} and \MT{} models. Section 3 presents the simulation study and numerical results, which provide insight into the behavior of the critical parameters and motivate several conjectures. Section 4 concludes with a discussion of open questions and future research directions.

\section{Models description and theoretical results}\label{sec:models}
\subsection{Mathematical formulation}

\subsubsection{The Markov processes}
Each of the three models we consider is a continuous-time Markov process $(\eta_t)_{t\geq 0}$ with state space $\mathcal{S}:=\{0,1,2\}^{\Z}$. That is, at time $t$ the state of the process is a function $\eta_t: V \longrightarrow \{0,1,2\}$. Given a configuration $\eta \in \mathcal{S}$, at each site $x\in\Z$, the state $\eta(x)\in\{0,1,2\}$ represents an ignorant, a spreader, or a stifler, respectively. Writing
\[
n_1(x,\eta)=\sum_{y:\,|y-x|=1}\ind\{\eta(y)=1\},\qquad
n_2(x,\eta)=\sum_{y:\,|y-x|=1}\ind\{\eta(y)=2\}
\]
for the numbers of nearest neighbors of $x$ in states 1 and 2, the local
transition rates $c_{ij}(x,\eta)$ from state $i$ to state $j$ at site $x$, for each of the considered rumor models,
are summarized in Table~\ref{tab:rates}. In the rest of the paper, we denote by $(\eta_t^{\text{o}\lambda})_{t\geq 0}$, $(\eta_t^{*\lambda,\alpha})_{t\geq 0}$, and $(\eta_t^{\lambda,\alpha})_{t\geq 0}$ the Markov processes associated with the \MTo{}, \CRS{}, and \MT{} models, respectively. We assume that the three processes start with one spreader at $0$ and all remaining vertices being ignorants. That is, we consider the initial configuration $\eta_0 \in \mathcal{S}$ such that $\eta_0(0)=1$ and $\eta_0(x)=0$ for all $x\neq 0$. We refer to this as the standard initial configuration. 

\begin{table}[h]
\centering
\caption{Transition rates for the three models. Parameters:
$\lambda>0$ (spreading rate), $\alpha\geq 0$ (stifling rate).}
\label{tab:rates}
\begin{tabular}{lccccc}
\toprule
Model & Notation & $c_{01}(x,\eta)$ & $c_{12}(x,\eta)$ & $c_{10}(x,\eta)$ & $c_{20}(x,\eta)$ \\
\midrule
\MTo & $(\eta_t^{\text{o},\lambda})_{t\geq 0}$ &$\lambda\,n_1$ & $n_1+n_2$             & $0$ & $0$ \\[.2cm]
\CRS & $(\eta_t^{*\lambda,\alpha})_{t\geq 0}$ &$\lambda\,n_1$ & $\alpha\,n_1$         & $1$ & $1$ \\[.2cm]
\MT  & $(\eta_t^{\lambda,\alpha})_{t\geq 0}$ &$\lambda\,n_1$ & $\alpha\,(n_1+n_2)$   & $1$ & $1$ \\
\bottomrule
\end{tabular}
\end{table}

The \MTo{} model is a spatial version on $\mathbb{Z}$ of the classical Maki--Thompson dynamics \citep{MakiThompson1973}: stifling occurs at unit rate from both spreaders and stiflers, and neither spreaders nor stiflers forget the rumor. Therefore, once a site enters state $2$, it remains there forever. The other two models incorporate a forgetting mechanism. On one hand, the \CRS{} model is the spatial rumor model introduced in~\cite{CRS2012}: a spreader is stifled only through contact with other spreaders, which occurs at rate $\alpha$, and both spreaders and stiflers forget the rumor at rate $1$. This model was proposed in $\mathbb{Z}^d$, for $d \geq 1$, as a spatial variant of the Maki--Thompson model that incorporates a forgetting mechanism. In \cite{CRS2012}, the authors obtained sufficient conditions on the rates under which the rumor either becomes extinct or survives with positive probability. On the other hand, the \MT{} model is a more natural spatial variant of the Maki--Thompson model because, as in the classical model, stiflers as well as spreaders induce stifling. This occurs at rate $\alpha$. Here, we also assume that both spreaders and stiflers forget the rumor at rate $1$. Note that in the \MTo{} model, the rate parameter $\alpha$ has been absorbed into the time scale.

\subsubsection{Harris’ graphical construction}

Thus defined, the rumor models are interacting particle systems, so we can define them through Harris’ graphical construction. To this end, consider a collection of independent Poisson processes 
$$\{N^{x,y}_{\lambda}, N^{x,y}_{\alpha}, D^x : x,y\in \mathbb{Z}, ||x-y||=1\},$$ 
with respective intensities $\lambda$, $\alpha$, and $1$. Then, at each arrival time of $N^{x,y}_{\lambda}$, if sites $x$ and $y$ are in states $1$ and $0$, respectively, in any of the three models, the state of site $y$ is updated to $1$. On the other hand, the stifling rule is governed by the processes $N^{x,y}_{\alpha}$ and depends on the model. At each arrival time of $N^{x,y}_{\alpha}$, in the \CRS{} model the state of $x$ changes to $2$ if both $x$ and $y$ are in state $1$, whereas in the \MT{} model it changes to $2$ if $x$ is in state $1$ and $y$ is in state $1$ or $2$. In the \MTo{} model the stifling rule is the same as in the \MT{} model but acts at unit rate rather than at rate $\alpha$, since for \MTo{} the parameter $\alpha$ is absorbed into the time scale; this corresponds to taking the stifling intensity equal to $1$ (that is, $N^{x,y}_{\alpha}$ with $\alpha=1$). Forgetting occurs only in the \CRS{} and \MT{} models: at each arrival time of $D^x$, if vertex $x$ is in state $1$ or $2$, its state changes to $0$. In the \MTo{} model there is no forgetting, so the marks of $D^x$ are ignored and stiflers are permanent.
In this way, we obtain realizations of the three stochastic rumor models on the one-dimensional integer lattice with transition rates given in Table \ref{tab:rates}. Note that, in order to construct the process inside a finite space--time box, it suffices to consider only the Poisson arrival times occurring within that box. For further details on graphical constructions, we refer the reader to \cite{Durrett1995,Harris1972}. The graphical construction is particularly useful because it allows the three Markov processes to be constructed on the same probability space, which will be important later for comparison purposes. See Figure \ref{fig:HGR}.

\begin{figure}[ht!]
\subfigure[Realization of the \CRS{} rumor model.]{
    \begin{tikzpicture}[
scale=1,
>=Stealth,
ignorant/.style={line width=0.5pt},
spreader/.style={line width=1.5pt},
stifler/.style={line width=2pt},
arrow/.style={->,line width=0.6pt}
]

\draw[->] (-1,0) -- (-1,8);
\node[rotate=90] at (-1.5,4) {Time};

\draw (-0.5,0) -- (5.5,0);



\draw[ignorant] (0,0) -- (0,4.2);
\draw[spreader] (0,4.2) -- (0,6);
\draw[ignorant] (0,6) -- (0,8);


\draw[ignorant] (1,0) -- (1,1);
\draw[spreader] (1,1) -- (1,5.2);
\draw[dotted,stifler] (1,5.2) -- (1,8);


\draw[spreader] (2,0) -- (2,7.5);
\draw[ignorant] (2,7.5) -- (2,8);


\draw[spreader] (3,0) -- (3,4.5);
\draw[dotted,stifler] (3,4.5) -- (3,8);


\draw[spreader] (4,2.5) -- (4,8);
\draw[ignorant] (4,0) -- (4,2.5);

\draw[ignorant] (5,0) -- (5,7.1);
\draw[spreader] (5,7.1) -- (5,8);


\draw[->,decorate] (1,4.2) -- (0,4.2);
\draw[->,decorate,dashed] (1,5.2) -- (2,5.2);
\draw[->,decorate,dashed] (2,6.5) -- (3,6.5);
\draw[arrow] (2,1) -- (1,1);
\draw[arrow] (3,2.5) -- (4,2.5);
\draw[->,decorate,dashed] (3,4.5) -- (4,4.5);
\draw[->,decorate,dashed] (4,6) -- (3,6);
\draw[->,decorate] (4,7.1) -- (5,7.1);
\draw[arrow] (5,2.2) -- (4,2.2);


 \draw [black] (0,6) circle (2.5pt);
    \filldraw [white] (0,6) circle (2.1pt);
 \draw [black] (2,7.5) circle (2.5pt);
    \filldraw [white] (2,7.5) circle (2.1pt);
 \draw [black] (4,1.3) circle (2.5pt);
    \filldraw [white] (4,1.3) circle (2.1pt);


\node at (0,-0.5) {0};
\node at (1,-0.5) {0};
\node at (2,-0.5) {1};
\node at (3,-0.5) {1};
\node at (4,-0.5) {0};
\node at (5,-0.5) {0};

\end{tikzpicture}}\qquad
\subfigure[Realization of the \MT{} rumor model.]{
    \begin{tikzpicture}[
scale=1,
>=Stealth,
ignorant/.style={line width=0.5pt},
spreader/.style={line width=1.5pt},
stifler/.style={line width=2pt},
arrow/.style={->,line width=0.6pt}
]

\draw[->] (-1,0) -- (-1,8);
\node[rotate=90] at (-1.5,4) {Time};

\draw (-0.5,0) -- (5.5,0);



\draw[ignorant] (0,0) -- (0,4.2);
\draw[spreader] (0,4.2) -- (0,6);
\draw[ignorant] (0,6) -- (0,8);


\draw[ignorant] (1,0) -- (1,1);
\draw[spreader] (1,1) -- (1,5.2);
\draw[dotted,stifler] (1,5.2) -- (1,8);


\draw[spreader] (2,0) -- (2,6.5);
\draw[dotted,stifler] (2,6.5) -- (2,7.5);
\draw[ignorant] (2,7.5) -- (2,8);

\draw[spreader] (3,0) -- (3,4.5);
\draw[dotted,stifler] (3,4.5) -- (3,8);

\draw[ignorant] (4,0) -- (4,2.5);
\draw[spreader] (4,2.5) -- (4,6);
\draw[dotted,stifler] (4,6) -- (4,8);

\draw[ignorant] (5,0) -- (5,8);


\draw[->,decorate] (1,4.2) -- (0,4.2);
\draw[->,decorate,dashed] (1,5.2) -- (2,5.2);
\draw[->,decorate,dashed] (2,6.5) -- (3,6.5);
\draw[arrow] (2,1) -- (1,1);
\draw[arrow] (3,2.5) -- (4,2.5);
\draw[->,decorate,dashed] (3,4.5) -- (4,4.5);
\draw[->,decorate,dashed] (4,6) -- (3,6);
\draw[->,decorate] (4,7.1) -- (5,7.1);
\draw[arrow] (5,2.2) -- (4,2.2);


 \draw [black] (0,6) circle (2.5pt);
    \filldraw [white] (0,6) circle (2.1pt);
 \draw [black] (2,7.5) circle (2.5pt);
    \filldraw [white] (2,7.5) circle (2.1pt);
 \draw [black] (4,1.3) circle (2.5pt);
    \filldraw [white] (4,1.3) circle (2.1pt);


\node at (0,-0.5) {0};
\node at (1,-0.5) {0};
\node at (2,-0.5) {1};
\node at (3,-0.5) {1};
\node at (4,-0.5) {0};
\node at (5,-0.5) {0};

\end{tikzpicture}}

    \caption{The graphical representation of the CRS rumor model (a) and the MT rumor model (b) on $\mathbb{Z}$ based on the same collection of independent Poisson processes. The horizontal line represents space, and the vertical line above a point $x$ is the time-line at $x$. The lines, bold lines and dotted bold lines indicate the portions of space–time that vertices are ignorant, spreader, and stifler, respectively. Arrows indicate transmission events associated with the processes $N^{x,y}_{\lambda}$, at which an ignorant becomes a spreader. Dashed arrows represent stifling interactions associated with the processes $N^{x,y}_{\alpha}$, causing a spreader to become a stifler. Finally, marks $\circ$ denote forgetting events associated with the processes $D^x$, corresponding to transitions from spreader or stifler states to the ignorant state.}
    \label{fig:HGR}
\end{figure}

\subsection{The homogeneously mixed population setting}

The classical Maki--Thompson model was introduced by \cite{MakiThompson1973} in the homogeneously mixed population setting, which is equivalent to assuming that the population is represented by a complete graph. In this case, the deterministic version of the model, i.e., its mean-field approximation, is given by the dynamical system

\begin{equation}\label{eq:MT0-ODE}
\left\{
\begin{array}{rcl}
x'(t) &=& -x(t)y(t),\\[.2cm] 
y'(t) &=& (2x(t)-1)y(t).
\end{array}
\right.
\end{equation}

\noindent
With the usual initial conditions $x(0)=1$ and $y(0)=0$, it is well-known that the solution of \eqref{eq:MT0-ODE} approximates, for sufficiently large $n$, a scaled version of the entire trajectories of the Maki–Thompson model \cite{Rada2021}. In this context, $x(t)$, $y(t)$ and $z(t)=1-x(t)-y(t)$ represent good approximations for the proportions of ignorants, spreaders and stiflers, respectively, at time $t$, in the stochastic process, provided $t>0$ and $n$ is large enough. This justifies why many results can be obtained directly from a suitable analysis of this limit dynamical system. The random fluctuations between the solution of \eqref{eq:MT0-ODE} and the stochastic trajectories of the original process have been extensively studied in the literature for the \MTo{} model and various generalizations, typically using martingale approaches or convergence techniques for density-dependent Markov chains, see \cite{LMR2011} and the references therein. These results are usually presented in the form of limit theorems.

The mean-field approximation to the \CRS{} dynamics yields the dynamical system 
\begin{equation}\label{eq:CRS-ODE}
\left\{
\begin{array}{rcl}
y'(t)&=&\lambda y(t)(1-y(t)-z(t))-\alpha y(t)^2-y(t),\\[.2cm]
z'(t)&=&\alpha y(t)^2-z(t),
\end{array}
\right.
\end{equation}
whose rumor-free equilibrium $(0,0)$ is unstable iff $\lambda>1$,
\emph{independently of $\alpha$}, see \citep{CRS2012}. On the other hand, it is not difficult to see that the mean-field approximation to the \MT{} dynamics yields the dynamical system

\begin{equation}\label{eq:MT-ODE}
\left\{
\begin{array}{rcl}
    y'(t)&=&\lambda y(t)(1-y(t)-z(t))-\alpha y(t)^2-\alpha y(t)z(t)-y(t),\\[.2cm]
z'(t)&=&\alpha y(t)^2+\alpha y(t)z(t)-z(t),
\end{array}
\right.
\end{equation}
for which $(0,0)$ is a steady state that we call the free rumor equilibrium state. Note that the Jacobian matrix of the system \eqref{eq:MT-ODE} at the steady state $(0,0)$ is

$$\left(
\begin{array}{cc}
    \lambda -1 & 0 \\
    0 & -1 
\end{array}
\right)$$
which is the same as the one of system \eqref{eq:CRS-ODE}. Therefore, the two eigenvalues are $\lambda -1$ and $-1$, and like in the \CRS{} model, the steady state $(0, 0)$ is unstable, if and only if, $\lambda > 1$; again, independently of $\alpha$.

As we will see in Section~\ref{sec:results}, these predictions for the \CRS{} and the \MT{} models fail qualitatively on $\Z$ since spatial structure makes a critical value of $\lambda$ strongly dependent on $\alpha$.

\subsection{Partial results on $\mathbb{Z}$ and first open questions} \label{sec:theorems}

For the three models, we say that the rumor \emph{survives} if the set of spreaders is nonempty for all times $t\geq 0$, and \emph{dies out} otherwise. We denote by $\theta^o(\lambda)$, $\theta^*(\lambda,\alpha)$, and $\theta(\lambda,\alpha)$ the survival probabilities of the \MTo{}, \CRS{}, and \MT{} models, respectively. In the context of a stochastic process, critical behavior refers to a qualitative change in the long-term behavior of the system when a parameter crosses a special threshold, called the critical point or critical value. In the case of rumor models on $\mathbb{Z}$, we study the spreading rate $\lambda$ and observe that the forgetting mechanism causes the process to behave very differently, in terms of survival or extinction, depending on whether $\lambda$ is below or above a certain critical value. Unlike in homogeneously mixed populations, this critical value depends strongly on the parameter $\alpha$.

\subsubsection{Extinction of the \MTo{} model and the size of the rumor} The proof that the \MTo{} model dies out almost surely, for any $\lambda >0$, is a consequence of the following result. 

\begin{theorem}\label{prop:MToZ}
 Let $\lambda >0$ and let $(\eta_{t}^{\text{o},\lambda})_{t\geq 0}$ be the \MTo{} model on $\mathbb{Z}$ with the standard initial configuration. Let $\tau_{\infty}=\inf\{t\geq 0: \eta_{t}^{\text{o},\lambda}(x)\neq 1, \text{ for all }x\in\mathbb{Z}\}$ be the absorption time of the process, and let 
    $$N(t):=\sum_{x\in \mathbb{Z}} 1_{\{\eta_t^{\text{o},\lambda}(x)=2\}},$$
    be the number of stiflers at time $t$. Then $\mathbb{P}(N(\tau_{\infty})<\infty)=1$, for all $\lambda >0$. Moreover, $\mathbb{E}(N(\tau_{\infty}))=2(\lambda +1)$.
\end{theorem}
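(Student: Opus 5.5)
The plan is to exploit the fact that in the \MTo{} model there is no forgetting, so the set of sites ever informed is always an interval $[L_t,R_t]\ni 0$, and every site of this interval stays in $\{1,2\}$ forever. Only the two endpoints $L_t$ and $R_t$ can have an ignorant neighbor, so only they can enlarge the interval. Moreover, the stifling rate in Table~\ref{tab:rates} is $n_1+n_2$, which is exactly the number of informed neighbors. It therefore does not matter whether the neighbors are spreaders or stiflers. This makes each front evolve as a simple race between an advance at rate $\lambda$ and a stop at rate $1$.

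\textbf{First step at the origin.} At time $0$ the spreader at $0$ has two ignorant neighbors, so $n_1+n_2=0$ and it cannot be stifled. It therefore almost surely informs one neighbor, say $+1$ by symmetry, after an $\mathrm{Exp}(2\lambda)$ time. From then on the interval is $\{0,1\}$. The right endpoint $1$ is a spreader with one informed neighbor and one ignorant neighbor. The left endpoint $0$ is in the same situation, with its ignorant neighbor being $-1$.

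\textbf{Right front.} Consider the rightmost spreader $R$ at a moment when its left neighbor is informed. Using Harris' construction, it informs $R+1$ at the first arrival of $N^{R,R+1}_\lambda$. It is stifled at the first arrival of $N^{R,R-1}_1$, because $R-1$ remains informed forever. Hence it advances with probability $p=\lambda/(\lambda+1)$ and is otherwise permanently stopped. Once the front advances, the old site $R$ has two informed neighbors and plays no further role for expansion, while the new endpoint $R+1$ is in exactly the same situation as before. By the strong Markov property and the memorylessness of the Poisson clocks, the number $K_R$ of further advances beyond site $1$ satisfies $\mathbb P(K_R\ge k)=p^k$. Thus $K_R$ is geometric with mean $p/(1-p)=\lambda$.

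\textbf{Left front.} The same argument applies on the left, starting from $0$ with its informed right neighbor $1$, and gives $K_L$ with the same law. The two fronts use Poisson processes attached to disjoint pairs of sites, so $K_L$ and $K_R$ are independent; independence is not needed for the mean anyway.

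\textbf{Conclusion.} Both $K_L$ and $K_R$ are almost surely finite, so the informed interval stabilizes at a finite set of size $2+K_L+K_R$. Every site in it has at least one informed neighbor, so any remaining spreader is stifled after an exponential time. Hence $\tau_\infty<\infty$ almost surely, and at $\tau_\infty$ all informed sites are stiflers. This gives $N(\tau_\infty)=2+K_L+K_R<\infty$ almost surely, and
\[
\mathbb E\bigl(N(\tau_\infty)\bigr)=2+2\lambda=2(\lambda+1).
\]

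\textbf{Main obstacle.} The one delicate point is justifying rigorously that each front's advance-versus-stop competition is a fresh, independent Bernoulli$(p)$ trial at every step. This rests on three facts: the behind-neighbor never reverts to ignorant, the relevant clocks $N^{R,R+1}_\lambda$ and $N^{R,R-1}_1$ are unused before $R$ becomes the endpoint, and the strong Markov property applies at the successive advance times. I would make this explicit through the graphical construction.
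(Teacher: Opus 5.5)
Your proof is correct and follows essentially the same argument as the paper: each front advances through independent races between a transmission clock of rate $\lambda$ and a stifling clock of rate $1$ (the rate is $1$ whether the neighbor behind is a spreader or a stifler), so the number of sites it reaches is geometric with parameter governed by $\lambda/(\lambda+1)$. The only difference is bookkeeping at the origin. The paper introduces an auxiliary variable $Y$ recording which neighbors $0$ informs and writes $N(\tau_\infty)=1+R\,\mathbb{I}_{\{Y\neq -1\}}+L\,\mathbb{I}_{\{Y\neq 1\}}$ with $R,L$ geometric on $\{1,2,\dots\}$, whereas you condition on the first transmission and treat the origin's second transmission as the first trial of the opposite front, which gives the slightly cleaner $N(\tau_\infty)=2+K_L+K_R$ with independent geometric $K_L,K_R$ of mean $\lambda$.
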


\begin{proof}
    The result is a consequence of considering the random variables $R$ as the rightmost informed vertex when the process starts with vertex $1$ as a spreader, and $0$ as spreader or stifler, $L$ as the leftmost informed vertex starting from vertex $-1$ as a spreader, and $0$ as spreader or stifler, and $Y$ be equal to $-1,1,2$ according to only $-1$ is informed by $0$, only $1$ is informed by $0$ or both $-1,1$ are informed by $0$, respectively. Thus defined it is not difficult to see that $R\sim Geom(1/(\lambda +1))$, $L\sim  Geom(1/(\lambda +1))$ and $Y$ is such that $\mathbb{P}(Y=-1)=\mathbb{P}(Y=1)=\frac{1}{2(\lambda+1)}$ and $\mathbb{P}(Y=2)=1-\frac{1}{\lambda+1}$, and it is independent of the random variables $R$ and $L$. Therefore $N(\tau_{\infty})=1 + R\, \mathbb{I}_{\{Y\neq -1\}} + L\, \mathbb{I}_{\{Y\neq 1\}},$ and the result holds. Indeed, note that
\begin{equation}
\begin{array}{rcl}
\mathbb{E}(N(\tau_{\infty})) &= &1 + \mathbb{E}(R)\mathbb{P}(Y\neq -1) + \mathbb{E}(L) \mathbb{P}(Y\neq 1)\\[.2cm]
&=&1+2(\lambda +1)\left(1-\frac{1}{2(\lambda+1)}\right)\\[.2cm]
&=&2(\lambda +1).
\end{array}
\end{equation}
\end{proof}

Thus defined, $N(\tau_{\infty})$ is the total number of distinct vertices of $\mathbb{Z}$ ever occupied by a spreader during the process. Setting $\lambda = 1$, we recover the result proved in \cite[Proposition 2.1]{DiazBacca2026}. We also refer the reader to \cite{DiazBacca2026-preprint} for a review of similar results obtained for an asymmetric $k$-fold stifling Maki--Thompson model on $\mathbb{Z}$ and for the Maki--Thompson model under the majority rule on $\text{Cay}(\mathbb{Z},\{\pm 1,\pm 2\})$, the Cayley graph of degree $4$ on $\mathbb{Z}$. In those works, the authors applied their results to study the behavior of the same models on suitably defined ring lattices. Their proofs rely on coupling the processes on the ring lattices with their infinite-volume counterparts, an argument that is naturally formulated through a graphical construction of the processes.

\subsubsection{Critical transmission rates for the \CRS{} and the \MT{} models}

At $\alpha=0$ the \CRS{} and \MT{} models both reduce to the basic
nearest-neighbor contact process on $\Z$. For the contact process $(\xi_{t})_{t\geq 0}$ it is well known that the survival probability
$$\theta^{cp}(\lambda):=\mathbb{P}(\cap_{t\geq 0}\cup_{x\in \mathbb{Z}}\{\xi_{t}(x)\neq 0\}),$$
is non-decreasing as a function of $\lambda$, and if $\lambda_c^{cp}:=\inf\{\lambda>0:\theta^{cp}(\lambda)>0\}$, then $\lambda_c^{cp}\approx 1.6489$~\citep{Liggett1999}. Moreover, it trivially hold
$$\theta^*(\lambda,0)=\theta(\lambda,0)\left\{
\begin{array}{lc}
    =0 & \text{ if }\lambda < \lambda_c^{cp},\\[.2cm]
     >0 & \text{ if }\lambda > \lambda_c^{cp}.
\end{array}
\right.
$$

As far as we know, the only rigorous results obtained for the rumor models considered here, with $\alpha > 0$, are those established by \cite{CRS2012} for the \CRS{} model on $\mathbb{Z}^d$, $d \geq 1$. In that work, the authors obtained sufficient conditions under which the rumor either becomes extinct or survives with positive probability. More precisely, they proved the following theorem.

\begin{theorem}\label{thm:CRS}
\cite[Theorems 2.1 and 2.2]{CRS2012}    Consider the \CRS{} rumor model on $\mathbb{Z}$ with the standard initial configuration, and let $\lambda_c^{cp}$ the critical parameter for the contact process on $\mathbb{Z}$. 
    \begin{enumerate}
        \item[(i)] If $\lambda \leq \lambda_c^{cp}$, then the rumor becomes extinct for all $\alpha \geq 0$.
        \item[(ii)] Let $\lambda > \lambda_c^{cp}$ be fixed. Then there exists constants $0<\alpha_1(\lambda)<\alpha_2(\lambda)<\infty$ such that
        \begin{enumerate}
            \item If $\alpha < \alpha_1(\lambda)$, the rumor may survive;
            \item If $\alpha > \alpha_2(\lambda)$, the rumor becomes extinct.
        \end{enumerate}
    \end{enumerate}
\end{theorem}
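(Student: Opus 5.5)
Part (i) follows from a monotone coupling with the contact process, and part (ii) from two separate comparisons: a renormalisation argument for small $\alpha$ and a direct domination argument for large $\alpha$.

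\textbf{Part (i): comparison with the contact process.} Build the \CRS{} process $(\eta_t^{*\lambda,\alpha})$ and a contact process $(\xi_t)$ on the same Harris construction. The contact process uses only the arrows $N^{x,y}_\lambda$ and the death marks $D^x$, and starts from $\xi_0=\delta_0$. We show by induction over the (locally finite) sequence of Poisson events that
\[
\{x:\eta_t^{*\lambda,\alpha}(x)=1\}\subseteq\{x:\xi_t(x)=1\}\quad\text{for all } t.
\]
The cases are as follows.
\begin{itemize}
\item A $D^x$ mark kills $x$ in both processes.
\item An $N_\alpha$ mark can only remove spreaders from the \CRS{} process.
\item A transmission arrow $x\to y$ creates a \CRS{} spreader at $y$ only if $x$ is a \CRS{} spreader. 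By the inclusion, $x$ is then occupied in $\xi$, so $y$ also becomes occupied in $\xi$.
\end{itemize}
Stiflers are only a further obstruction, since in the \CRS{} process a stifler cannot be reinfected until it forgets. Hence survival of the rumor implies survival of $\xi$. Since $\theta^{cp}(\lambda)=0$ for $\lambda\le\lambda_c^{cp}$, the rumor dies out.

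At $\lambda=\lambda_c^{cp}$ itself this needs the fact that the critical contact process dies out, which is the Bezuidenhout--Grimmett theorem. I would quote it as part of the contact-process literature already cited (\cite{Liggett1999}).

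\textbf{Part (ii)(b): extinction for large $\alpha$.} Fix $\lambda$ and compare with a subcritical branching-type process. Consider a spreader at $x$ whose neighbor, say its parent, is still a spreader. It becomes a stifler at rate $\alpha$ per spreading neighbor, so for large $\alpha$ a freshly informed spreader adjacent to its informer is stifled almost immediately. To make this quantitative, follow the spreader set together with its "active pairs", i.e. adjacent spreader pairs. Each new infection creates an active pair. Each member of the pair is stifled at rate $\alpha$, while new infections out of the pair occur at rate at most $2\lambda$.

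I would attach to each infection event the number of offspring that the new spreader and its parent produce before one of them is stifled. With $\alpha\gg\lambda$ this number has mean of order $\lambda/\alpha$. A spreader that survives this phase alone then acts as an isolated contact-process particle; it dies at rate $1$ and must produce children, each of which again lands in an active pair. I expect the resulting mean-offspring bound to be $C\lambda/\alpha\cdot(1+\lambda)<1$ for $\alpha>\alpha_2(\lambda)$. Domination of the spreader count by a subcritical Galton--Watson process then gives extinction.

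This is where I expect the main difficulty. Dependencies between neighboring spreaders, and stiflers blocking reinfection, all help extinction, so a careful supermartingale argument should work. A candidate is a weighted count of isolated spreaders plus spreaders in pairs, with weights chosen so that the drift is negative when $\alpha$ is large.

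\textbf{Part (ii)(a): survival for small $\alpha$.} Use the Bezuidenhout--Grimmett/Durrett block construction. Since $\lambda>\lambda_c^{cp}$, the contact process satisfies the finite space-time box criterion: for any $\varepsilon>0$ there are $L$ and $T$ such that, starting from a suitable seed in $[-L,L]$, the process restricted to a box of size of order $L\times T$ produces translated seeds in both neighboring boxes with probability $>1-\varepsilon$. This event depends only on finitely many Poisson marks in the box. On the event that no $N_\alpha$ mark occurs in the box, which has probability $e^{-c\alpha LT}\to1$ as $\alpha\to0$, the \CRS{} dynamics inside the box coincide with contact-process dynamics. This is because stiflers arise only through $N_\alpha$ marks, and the process started inside the box is the same on that event.

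Hence for $\alpha<\alpha_1(\lambda)$ the good-box probability exceeds the threshold for oriented site percolation with finite-range dependence, and comparison with supercritical oriented percolation gives survival with positive probability. Finally, choose $\alpha_1<\alpha_2$, which is possible because $\alpha_1$ can always be decreased.
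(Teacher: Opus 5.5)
The paper gives no proof of this statement; it quotes it from \cite{CRS2012}. Your part (i) is correct. It is the same contact-process domination that the paper uses for the \MT{} model in the proof of Theorem~\ref{thm:phase}, with Bezuidenhout--Grimmett covering $\lambda=\lambda_c^{cp}$.

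\textbf{Main gap: part (ii)(b) is not proved.} The offspring bound $C\lambda(1+\lambda)/\alpha$ is only an expectation. The proposed pathwise domination by a subcritical Galton--Watson process relies on interactions between neighbouring spreaders ``all helping extinction'', and that fails for this non-attractive dynamics. In a run of three spreaders, stifling the middle site leaves two isolated spreaders, so an extra spreader neighbour is exactly what produces branching. Likewise, deleting one stifling event keeps a spreader alive that may later stifle another site. This is the failure of graphical couplings the paper notes after Theorem~\ref{thm:CRS}, so you cannot discard the ``other'' interactions to get a dominating branching process.

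The step has to be done in expectation. Your supermartingale idea works if you also weight spreaders with two spreader neighbours. Take $W(\eta)=\sum_{x:\,\eta(x)=1} w_{n_1(x,\eta)}$ with $w_0=1$, $w_1=(1+\delta)/2$, $w_2=1$ and $\delta=1/(4\lambda)$. Removing a spreader, by forgetting or stifling, changes $W$ by $-1$ for an isolated spreader or the end of a run of length at least $3$, and by at most $-\delta$ otherwise. An infection changes $W$ by $\delta$, $1$ or $2-\delta$, and stiflers only suppress these positive jumps. Charge each infection that merges a run with an isolated spreader to the run. The drift is then at most $-1+2\lambda\delta$ per isolated spreader, at most $-2(1+\alpha)\delta+8\lambda$ per pair, and at most $-2(1+\alpha)+8\lambda-(k-2)(1+2\alpha)\delta$ per run of length $k\ge 3$. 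Hence the drift of $W$ is $\le -cW$ once $\alpha\ge C\lambda^2$. Then $\mathbb{E}\,W(\eta_t)\le e^{-ct}$, and since $W\ge |\{x:\eta(x)=1\}|/2$, the rumor dies out.

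\textbf{Secondary, fixable gap in (ii)(a).} On the event of no $N_\alpha$ marks in a box, the \CRS{} spreaders do not coincide with the truncated contact process. Stiflers created earlier, e.g.\ in neighbouring bad boxes, can sit inside the box at its starting time $t_0$ and block infections until their first $D$ mark. The \CRS{} spreaders are then only dominated by the contact process, so the Bezuidenhout--Grimmett good event does not transfer as stated.

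To repair this, enlarge the good event with a time buffer. Require no $N_\alpha$ marks at the sites of the box's spatial window during $[t_0-s,t_0+T]$, and at least one $D$ mark at each of these sites during $[t_0-s,t_0]$, with $s\gg \log L$. This clears all old stiflers by time $t_0$. After that, since the dynamics without stifling is attractive, the \CRS{} spreaders in the box dominate the truncated contact process from the seed. The dependence remains finite-range. Choosing $L,T$ first, then $s$, then $\alpha$ small makes the good-box probability as close to $1$ as needed.
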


Note that Theorem \ref{thm:CRS} provides only qualitative information about the behavior of the process. In particular, it does not establish the existence of a unique critical parameter or provide a more precise description of the behavior of the process when $\lambda$ depends on $\alpha$. For each fixed $\alpha > 0$, we are unable to prove that the process is monotone in $\lambda$. The usual approach used for the contact process, namely coupling through the graphical representation of the process, fails in this setting. Indeed, although increasing $\lambda$ increases the number of spreaders, a larger number of spreaders also leads to more stifling events, so the traditional coupling argument does not apply. As we will see in Section \ref{sec:survival}, simulations suggest that this monotonicity should hold, see Figure~\ref{fig:survival}. However, new techniques are required for a rigorous proof, and we leave this question for future work. 

For our study, we consider the critical transmission rates associated with a given stifling rate, defined by

\begin{equation}
\lc^*(\alpha) =\inf\{\lambda >0: \theta^*(\lambda,\alpha)>0\},\quad \text{and}\quad
\lc(\alpha) =\inf\{\lambda >0:\theta(\lambda,\alpha)>0\},
\end{equation}

\smallskip
\noindent
for the \CRS{} and \MT{} models, respectively.

\begin{theorem} \label{thm:phase}
Consider the \CRS{} and the \MT{} rumor models on $\mathbb{Z}$ with the standard initial configuration, and let $\lambda_c^*$ and $\lambda_c$ be its critical transmission rates, respectively. Then, for any $\alpha > 0$,
\begin{enumerate}
    \item[(i)] $0<\lambda_c^*(\alpha)<\infty.$
    \item[(ii)] $ 0< \lambda_c(\alpha)<\infty.$
\end{enumerate}
\end{theorem}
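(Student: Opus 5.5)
We split the statement into the lower bounds $\lc^*(\alpha)>0$, $\lc(\alpha)>0$ and the upper bounds $\lc^*(\alpha)<\infty$, $\lc(\alpha)<\infty$.

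\textbf{Lower bounds.} We use the graphical construction of Figure~\ref{fig:HGR}. With all Poisson processes fixed, the set of sites that are ever spreaders in either model (\CRS{} or \MT{}) is contained in the set of occupied sites of the contact process $(\xi_t)$ with the same infection arrows $N^{x,y}_\lambda$ and recovery marks $D^x$, where $\xi_0=\ind_{\{0\}}$. Indeed, a site can only become a spreader through an arrow from a spreader onto an ignorant. Spreaders lose that status at every $D^x$ mark, and possibly earlier through stifling. Stifling only removes spreaders, and a stifler is never an ignorant that could be infected, so it never creates one. An induction over the (locally finitely many) event times in a space--time box shows $\{x:\eta_t(x)=1\}\subseteq\{x:\xi_t(x)=1\}$. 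Hence $\theta^*(\lambda,\alpha)\le\theta^{cp}(\lambda)$ and $\theta(\lambda,\alpha)\le\theta^{cp}(\lambda)$. Both therefore vanish for $\lambda<\lc^{cp}$, which gives $\lc^*(\alpha),\lc(\alpha)\ge\lc^{cp}>0$. For \CRS{} this is also Theorem~\ref{thm:CRS}(i).

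\textbf{Upper bound for \CRS.} Because monotonicity in $\lambda$ is unavailable, it is enough to exhibit, for each $\alpha>0$, one $\lambda$ (or an unbounded set of $\lambda$) with $\theta^*(\lambda,\alpha)>0$. The infimum is then finite. Theorem~\ref{thm:CRS}(ii)(a) is stated at fixed $\lambda$ with small $\alpha$, so it does not directly give what we need. I would instead run a direct block/oriented-percolation comparison (Bezuidenhout--Grimmett / Durrett style) with $\lambda$ large. The natural one-dimensional object is the right edge $r_t$, the rightmost spreader. When $r_t$ is an isolated spreader with ignorant right neighbor, it infects $r_t+1$ at rate $\lambda$. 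After that, $r_t+1$ sees $r_t$ as a spreader, so the pair mutually stifles at rate $2\alpha$, while the new front spreads again at rate $\lambda$ and is lost at rate $1$ or $\alpha$.

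Fix space--time boxes of width $L$ and height $T$. Call a box good if, starting from a single spreader in its bottom center, the descendants produce spreaders in the bottom centers of both neighboring boxes one level up, while using only Poisson marks inside a slightly enlarged box. Then pick $\lambda$ large: an infection chain crossing $L$ sites happens in time $O(L/\lambda)$. Before the front moves on, each step creates a spreader pair that survives stifling with probability about $\lambda/(\lambda+2\alpha)$. Stifling only hits adjacent spreader pairs, and a freshly infected site immediately infects further, so if the front runs ahead fast enough the chance that the front is killed per step is $O(\alpha/\lambda)$. Over $L$ steps this is $O(L\alpha/\lambda)$, which is small when $\lambda\gg L\alpha$.

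\textbf{The main obstacle.} Goodness must hold with probability above the oriented-percolation threshold \emph{despite nonmonotonicity}. Spreaders entering from outside a box could stifle the chain. I would handle this by defining goodness as an event on the Poisson marks alone that forces the chain no matter what the configuration is outside. Concretely, require a rapid chain of $\lambda$-arrows in which each arrow precedes any $\alpha$-mark and any $D$-mark on the relevant edges and sites, within a short window. Then in any configuration the target sites are spreaders at the required times. Note that stiflers from outside do not block \CRS{}, because \CRS{} stifling needs a spreader neighbor, and such sites forget at rate $1$. The resulting events are $1$-dependent, so standard comparison (Liggett--Schonmann--Stacey) gives survival. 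A cleaner variant: use only events where each new spreader infects its outward neighbor within time $\varepsilon$ and no stifling or forgetting marks occur in that window. The per-step success probability is at least $\lambda\varepsilon e^{-\lambda\varepsilon}\cdot e^{-(2\alpha+1)\varepsilon}$-like terms, and it tends to $1$ if $\lambda\to\infty$ with $\varepsilon=\lambda^{-1/2}$.

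\textbf{Upper bound for \MT.} The same scheme applies, but stiflers also stifle. Goodness must therefore additionally control the stifler left behind at the site two steps back. This is harmless because the front site only interacts with its immediate left neighbor, which is the spreader parent (now possibly a stifler). We require that the child's outward infection precede every $\alpha$-mark from the child toward its parent. This has probability $\lambda/(\lambda+2\alpha+1)\to1$, so the argument goes through and yields $\lc(\alpha)<\infty$ for every $\alpha>0$.
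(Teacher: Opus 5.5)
Your lower bounds are correct and coincide with the paper's argument: the spreaders are dominated by the contact process built from the same $N^{x,y}_\lambda$ and $D^x$ marks, with the $\alpha$-marks ignored. Your ``cleaner variant'' for finiteness is essentially the paper's open-site event. The gap is the claim that an event on the Poisson marks alone ``forces the chain no matter what the configuration is''.

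A $\lambda$-arrow $x\to y$ acts only if $y$ is \emph{ignorant}. Suppose $y$ is a stifler when your window opens. In any stretch the rumor has already swept, this has probability not tending to $0$ as $\lambda\to\infty$; for instance, $y$ was infected earlier and then hit by a mark of $N^{y,y\pm1}_\alpha$ while next to a spreader. Because you forbid $D^y$-marks in the window, $y$ stays a stifler throughout, the arrow is void, and the chain stops. In \MT{} such a $y$ can also stifle the front. Your remark that stiflers ``do not block \CRS{}'' conflates \emph{cannot stifle} with \emph{cannot be infected}. In your \MT{} paragraph the front also interacts with its right neighbour: ``only the parent matters'' holds only in never-visited territory. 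Any box construction must re-enter sites swept earlier, where stiflers persist for times of order one.

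Fresh territory alone cannot carry survival either. An uninterrupted run in which every new front site infects its right neighbour before its own $D$-mark has probability $\prod_k \lambda/(\lambda+1)=0$. So survival rests on fronts being re-launched from the informed trail, which is exactly where stiflers matter. Also, $\lambda\varepsilon e^{-\lambda\varepsilon}\to0$ for $\varepsilon=\lambda^{-1/2}$; you mean $1-e^{-\lambda\varepsilon}$.

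The omission is not cosmetic, because the obvious repairs do not give probability tending to one:
\begin{enumerate}
\item The natural mark-only guarantee that $y$ is not a stifler is that the last of the marks $D^y$, $N^{y,y\pm1}_\alpha$ before the arrow is a $D$-mark. This has probability $1/(1+2\alpha)$, uniformly in $\lambda$.
\item Demanding a $D^y$-mark inside a window of length $\varepsilon$ costs a factor of order $\varepsilon$.
\item On a window of order one, excluding $\alpha$-marks costs a factor $e^{-c\alpha}$.
\end{enumerate}
So for fixed $\alpha$ the good event cannot be pushed above the percolation threshold merely by taking $\lambda$ large. One-step comparisons of this kind work when $\alpha$ is small, which is the regime of Theorem~\ref{thm:CRS}(ii)(a).

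Be aware that the paper's own finiteness argument has the same hole. Its identification of $V_n$ with spreaders at time $(n+1)\Delta$ fails in the following situation. Let $(0,0),(1,1),(2,2)$ be open, and during $(0,\Delta)$ let an extra arrow $1\to2$ and then $2\to3$ be followed by a mark of $N^{2,3}_\alpha$. Then site $2$ is a stifler at time $\Delta$. Since no $D^2$-mark is allowed on $(\Delta,3\Delta]$, it is still a stifler at time $3\Delta$, although $2\in V_2$.

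Following the paper therefore does not repair your step. What both arguments lack is a way to control stifler wakes on time scales of order one. One possible route is a multiscale argument exploiting two facts: a front in fresh territory runs about $\lambda/\alpha$ sites, and the long informed trail it leaves must be shown to re-launch fronts with high probability.
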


\begin{proof} To prove positivity fix $\alpha>0$ and note that by Theorem \ref{thm:CRS}(i), we have $\theta^*(\lambda,\alpha)=0$ for all $\lambda \leq \lambda_c^{cp}$. Therefore, $\lambda_c^{cp} \leq \lambda_c^*(\alpha)$, and the result follows since $\lambda_c^{cp}>0$. 
The same lower bound holds for the \MT{} model. Indeed, the set of \MT{} spreaders is dominated by a contact process: an ignorant site turns into a spreader only through a mark of $N^{x,y}_\lambda$ from a spreader neighbour (rate $\lambda$ per such neighbour), while every spreader leaves state $1$ at total rate $1+\alpha(n_1+n_2)\geq 1$. Coupling through the marks $N^{x,y}_\lambda$ and $D^x$ and ignoring those of $N^{x,y}_\alpha$ produces a contact process $\xi$ with $\xi_t(x)\geq \mathbf{1}\{\eta_t(x)=1\}$ for all $x$ and $t$; hence for $\lambda\leq\lambda_c^{cp}$ the \MT{} spreaders die out, so $\theta(\lambda,\alpha)=0$ and $\lambda_c(\alpha)\geq\lambda_c^{cp}>0$.

The proof of finiteness is constructive and is based on coupling arguments using the graphical construction of the process. Let $\Delta > 0$, and let $m,n \in \mathbb{Z}$ be such that $m+n$ is even. We shall define independent random variables $X_{m,n}$, taking the values $0$ and $1$. We declare $X_{m,n}=1$, and call
$(m,n)$ open, if and only if, in the graphical representation of the \MT{} rumor model, the following events occur:

\begin{itemize}
\item there is no forgetting events in the interval $\{m\}\times ((n-1)\Delta,(n+1)\Delta]$,

\item there is no stifling interactions from the interval $\{m\}\times ((n-1)\Delta,(n+1)\Delta],$

\item there exist left and right transmission events from the
interval $\{m\}\times (n\Delta,(n+1)\Delta].$
\end{itemize}

By construction we have that the random variables $X_{m,n}$ are independent, and
\[
p=p(\Delta)=\mathbb{P}_{\lambda}(X_{m,n}=1)
=e^{-2(1+\alpha)\Delta}(1-e^{-\lambda \Delta})^2.
\]

Now, we choose $\Delta$ to maximize $p(\Delta)$; that is $\Delta=\frac{1}{\lambda}\log\!\left(1+\frac{\lambda}{1+\alpha}\right).$ Thus,
\begin{equation}
\label{6.8}
p=p(\Delta)=\frac{(1+\alpha)^{2(1+\alpha)/\lambda}\lambda^2}{(\lambda+1+\alpha)^{2(\lambda + 1 + \alpha)/\lambda}}.
\end{equation}

Consider the $X_{m,n}$ as giving rise to a directed site percolation model on the first quadrant of a rotated copy of $\mathbb{Z}^2$. It is not difficult to see that, for any $n\in \mathbb{N}$, the set $V_n$ of vertices of the form $(m,n)$ that are reached from $(0,0)$ along open paths of the percolation process can be identified with a subset of vertices $m\in\mathbb{Z}$ in the rumor process such that $m$ is a spreader at time $(n+1)\Delta$. Note that $\mathbb{P}(\cap_{n\geq 0}\{|V_n|>1\})>0 \quad \text{if } p>\overrightarrow{p}_c^{site}$, where $\overrightarrow{p}_c^{\mathrm{site}}$ is the critical probability of the percolation model. Thus, we have that $\theta(\lambda,\alpha)>0$ if
$p(\Delta) > \overrightarrow{p}_c^{\mathrm{site}}$. Since $\overrightarrow{p}_c^{\mathrm{site}}<1$, see \cite{Ballister1994,Durrett1992,Liggett1995}, the final inequality is valid for sufficiently large $\lambda$, and we deduce that $ \lambda_c(\alpha)<\infty$.

The same construction yields finiteness for the \CRS{} model without modification. The event
$\{X_{m,n}=1\}$ is defined solely through the driving Poisson marks, requiring no marks of $D^x$ and no marks of $N^{x,y}_\alpha$ on $\{m\}\times((n-1)\Delta,(n+1)\Delta]$, together with left and right marks of $N^{x,y}_\lambda$ on $\{m\}\times(n\Delta,(n+1)\Delta]$. Forbidding the $N^{x,y}_\alpha$ marks prevents every stifling event of \emph{either} model on this region, since the \CRS{} stifling events form a subset of the \MT{} ones. An open path therefore carries a surviving, advancing spreader in the \CRS{} dynamics exactly as in the \MT{} dynamics, the probability $p(\Delta)$ in \eqref{6.8} is unchanged, and the same percolation comparison gives $\theta^*(\lambda,\alpha)>0$ for $\lambda$ large enough. Hence $\lambda_c^*(\alpha)<\infty$.
\end{proof}

At this point, a natural question is whether there is any connection between the two models. Intuitively, one may expect the \CRS{} model to dominate the \MT{} model, in the sense that
$\theta(\lambda,\alpha)\leq \theta^*(\lambda,\alpha)$, which would imply that $\lambda_c^{*}(\alpha)\leq \lambda_c(\alpha).
$ The intuition behind this conjecture is that the \CRS{} dynamics preserve state $1$ for a longer period of time, thereby favoring the persistence of the rumor. However, standard coupling arguments do not seem sufficient to establish this claim, suggesting that new approaches may be required. In Section~\ref{sec:results}, we present numerical evidence supporting this conjecture;  see Figure~\ref{fig:crs_vs_mt}, where the CRS line lies below the MT line throughout the measured range.


\section{Simulations and numerical results} \label{sec:results}
\subsection{Simulation method}\label{sec:method}

We simulate each of the three models by direct sampling of the
continuous-time Markov chain via the Gillespie
algorithm~\citep{Gillespie1977}. The two main technical ingredients are
an exact event-driven time stepping and a lattice that expands
dynamically whenever the rumor approaches its boundary.

\subsubsection{The Gillespie algorithm for spatial rumor models}

Let $r(x,\eta)$ denote the total rate of a transition at site $x$ in
configuration $\eta$, i.e.
\[
r(x,\eta) \;=\; \sum_{j\neq \eta(x)} c_{\eta(x)j}(x,\eta),
\]
and let $R(\eta)=\sum_x r(x,\eta)$ be the total rate of the process.
Algorithm~\ref{alg:gillespie} describes one realization of the process
until extinction, a prescribed event cap $M$, or a boundary-expansion
limit is reached.

\begin{algorithm}[h]
\caption{Gillespie simulation of a spatial rumor model on a finite lattice}
\label{alg:gillespie}
\begin{algorithmic}[1]
\Require model type $\in\{\CRS,\MT,\MTo\}$, parameters $\lambda,\alpha$;
         initial lattice size $L$; event cap $M$; expansion size $\Delta L$
\Ensure  trajectory observables: $N$ (total distinct sites visited by a
         spreader), $T$ (total time the origin spends in state 1),
         event count, final time
\State Initialize $\eta\gets 0$ on $\{-L/2,\dots,L/2\}$; set $\eta(0)\gets 1$
\State Compute $n_1(x,\eta), n_2(x,\eta)$ and $r(x,\eta)$ for all $x$
\State $t\gets 0$;\ $k\gets 0$;\ $N\gets 1$
\While{$\sum_x r(x,\eta) > 0$ \textbf{ and } $k<M$}
  \State $R \gets \sum_x r(x,\eta)$
  \State Draw $\tau\sim\mathrm{Exp}(R)$;\ \ $t\gets t+\tau$
  \State Draw site $x^\star$ with probability $r(x,\eta)/R$
  \State Determine transition at $x^\star$:
  \If{$\eta(x^\star)=0$}
      \State $\eta(x^\star)\gets 1$ \Comment{ignorant $\to$ spreader}
      \If{$x^\star$ not previously visited} $N\gets N+1$ \EndIf
  \ElsIf{$\eta(x^\star)=1$}
      \State $q \gets c_{12}(x^\star,\eta)/[c_{12}(x^\star,\eta)+c_{10}(x^\star,\eta)]$
      \State With prob.\ $q$ set $\eta(x^\star)\gets 2$, else $\eta(x^\star)\gets 0$
  \Else
      \State $\eta(x^\star)\gets 0$ \Comment{stifler $\to$ ignorant (\CRS, \MT) / impossible (\MTo)}
  \EndIf
  \State Update $n_1,n_2,r$ at $x^\star$ and its neighbors
  \If{a spreader exists within distance $2$ of a lattice endpoint}
     \State Expand lattice by $\Delta L$ ignorant sites on that side
  \EndIf
  \State $k\gets k+1$
\EndWhile
\State \Return $(N, T, k, t)$
\end{algorithmic}
\end{algorithm}

The algorithm is exact in the sense that it produces a realization of the
continuous-time Markov chain with the prescribed generator. Because only
sites with nonzero rate are active (ignorant sites without spreader
neighbors contribute $0$ to $R$), the event-driven sampling is efficient:
the cost of each event scales with the size of the active region around
the front rather than with the lattice size. We maintain the rate vector
incrementally, updating only the entries at the fired site and its
neighbors after each event.

\subsubsection{Dynamic lattice expansion}

A fixed large lattice would either waste memory or eventually bias the
dynamics through the boundary. We therefore start from a moderate initial
size (typically $L=2001$) and append $\Delta L=1000$ ignorant sites on
either end whenever a spreader reaches within two sites of the current
boundary. The number of such expansions is recorded as a diagnostic.

\subsubsection{Observables and the critical value}

For each parameter pair $(\lambda,\alpha)$ we perform $n$ independent runs
of Algorithm~\ref{alg:gillespie} and record:

\begin{itemize}
\item $N$: the total number of distinct sites ever occupied by a spreader
      during the run;
\item $T$: the total time the origin spends in state~1, integrated over the
      run;
\item the total number of events $k$ and the number of lattice expansions.
\end{itemize}

In the subcritical regime the rumor dies almost surely in finite time and
$N,T$ are finite random variables. As $\lambda\uparrow\lc(\alpha)$, both
$\E[N]$ and $\E[T]$ diverge. We estimate $\lc(\alpha)$ by a
\emph{polynomial-extrapolation} procedure, previously used in related
contexts \citep{GarciaMaric2008}: compute the sample mean $\widehat{\E[N]}(\lambda)$
on a grid of subcritical $\lambda$-values, form the inverse
$1/\widehat{\E[N]}(\lambda)$, fit a low-degree polynomial in $\lambda$, and
take $\lc$ to be the smallest positive root of that polynomial. The same
procedure applied to $\E[T]$ provides an independent cross-check.

\subsubsection{Validation against the contact process}

At $\alpha=0$ the \CRS{} and \MT{} models coincide with the basic contact process. Applying the procedure above yields $\lc^*(0)=\lc(0)\approx 1.65$, in
agreement with the accepted value $\lc^{\mathrm{cp}}\approx
1.6489$~\citep{Liggett1999}. This validates both the implementation and
the extrapolation method.
We note that $\lambda_c(\alpha)$ is estimated from two observables with
independent sources of randomness, the total range $N$ and the origin
occupation time $T$, whose polynomial extrapolations agree across the
range of $\alpha$ for which both are usable (Section~\ref{sec:crs-quadratic}). Since
$N$ and $T$ are accumulated by unrelated mechanisms, a counting or
bookkeeping artifact in one would not be reproduced in the other, so this
agreement provides a consistency check on the critical-value estimates
beyond the contact-process validation above.

\subsubsection{Diagnostics and reproducibility}

For each $(\lambda,\alpha)$ grid point we recorded the number of runs
that hit the event cap $M$ (``censored'' runs) and the mean number of
lattice expansions per run. In the \CRS{} regime
$\alpha\leq 5$ and in all \MT{} and \MTo{} scans the censored fraction
was effectively zero at subcritical $\lambda$. For the \CRS{} runs at
$\alpha\in\{10,20\}$ near criticality, a small fraction of runs
($\lesssim 5\%$) hit $M=10^{8}$; we flag these cases in the final
tables. Expansion counts ranged from 0 (deeply subcritical) to
$\sim 10$ per run near criticality, which sits well within the
size-safety margins of the dynamic lattice.

All simulations were performed in \textsf{R}, with the critical-regime
\MT{} scans accelerated using a compiled C\texttt{++} inner loop via
\textsf{Rcpp}. Random numbers were drawn from \textsf{R}'s Mersenne
Twister with explicit seeds recorded per scan. Typical scan parameters were $n = 50$ runs per $\lambda$-value
for the CRS and MT scans (raised to $n = 100$--$200$ for the extrapolation and
survival-probability figures), $n = 20$--$30$ near very slow parameter regions, and polynomial degrees 3--7 for the $1/\E[N]$ extrapolation. The simulation code will be made available on a public repository at the time of publication.

\subsection{Numerical results and conjectures}\label{sec:results}

We report results for the three models in turn.

\subsubsection{The \MTo{} model: linear mean range}

In the original Maki--Thompson model, stiflers are permanent. Every
site the rumor reaches becomes, after an exponential delay, a stifler
that never clears. On $\Z$, the front therefore advances into a purely
ignorant region on one side and into a permanent stifler wall on the
other (or on both sides, for a two-sided front started from a single
spreader at the origin).

In our simulations the rumor died in every realization we ran, for
$\lambda$ spanning more than two decades. Moreover, the sample mean
total range $\widehat{\E[N]}$ agrees with the exact value $2(\lambda+1)$ of
Theorem~\ref{prop:MToZ} to within two Monte Carlo standard errors at every
$\lambda$ simulated; see Table~\ref{tab:mto} and Figure~\ref{fig:mto}. The additive constant is resolved
at small $\lambda$, where $\widehat{\E[N]}/\lambda$ sits clearly above $2$
(for instance $2.20$ at $\lambda = 10$), and becomes negligible at large
$\lambda$, where the leading $2\lambda$ behavior dominates.

\begin{table}[h]
\centering
\caption{\MTo{} model on $\Z$: sample mean total range $\widehat{\E[N]}$,
estimated from $n$ independent runs at each $\lambda$, against the exact
prediction $2(\lambda+1)$ of Theorem~\ref{prop:MToZ}. The reported standard
errors are of the mean; $\widehat{\E[N]}$ agrees with $2(\lambda+1)$ to within
two standard errors at every $\lambda$.}
\label{tab:mto}
\begin{tabular}{rrrrrr}
\toprule
$\lambda$ & $n$ & $\widehat{\E[N]}$ & SE & $2(\lambda+1)$ & $\widehat{\E[N]}/\lambda$ \\
\midrule
$10$    & $20000$ & $22.0$    & $0.1$   & $22$     & $2.20$ \\
$50$    & $8000$  & $100.7$   & $0.8$   & $102$    & $2.01$ \\
$100$   & $4000$  & $204.6$   & $2.2$   & $202$    & $2.05$ \\
$500$   & $1000$  & $992.0$   & $21.6$  & $1002$   & $1.98$ \\
$1000$  & $300$   & $2067.1$  & $79.1$  & $2002$   & $2.07$ \\
$5000$  & $100$   & $9997.6$  & $770.4$ & $10002$  & $2.00$ \\
\bottomrule
\end{tabular}
\end{table}

\begin{figure}[h]
\centering
\begin{tikzpicture}
\begin{axis}[
  width=10cm, height=6.5cm,
  xlabel={$\lambda$}, ylabel={$\E[N]$},
  xmode=log, ymode=log,
  xmin=5, xmax=8000, ymin=10, ymax=20000,
  grid=both, grid style={line width=.1pt, draw=gray!20},
  legend pos=north west,
  title={\MTo{} model on $\Z$: mean total range $\widehat{\E[N]} = 2(\lambda+1)$}
]
\addplot+[only marks, mark=*, mark size=2.5pt, color=blue]
  coordinates {
    (10,22.0) (50,100.7) (100,204.6)
    (500,992.0) (1000,2067.1) (5000,9997.6)
  };
\addlegendentry{Simulation}
\addplot[domain=5:8000, samples=100, red, thick]
  {2*(x+1)};
\addlegendentry{$2(\lambda+1)$}
\end{axis}
\end{tikzpicture}
\caption{Sample mean total range $\widehat{\E[N]}$ of the \MTo{} model on $\Z$
as a function of $\lambda$, averaged over $n$ independent runs per $\lambda$-value
(Table~\ref{tab:mto}). The data closely track the exact law $\E[N] = 2(\lambda+1)$
of Theorem~\ref{prop:MToZ} (red).}
\label{fig:mto}
\end{figure}

The factor $2$ admits a simple heuristic interpretation in terms of
independent left and right fronts. Consider the right-moving front.
When a spreader at site $x$ has just been created by its left neighbor
$x-1$, the new spreader has one spreader neighbor (the parent, at
$x-1$) and one ignorant neighbor (at $x+1$). Two competing transitions
affect its future: it can spread further at rate $\lambda$ (infecting
$x+1$), or it can be stifled by its parent at rate~$1$. The probability
that the front advances one more step is therefore approximately
$\lambda/(\lambda+1)$, and, once advanced, the same dynamics repeats at
the new front. The expected number of right-side sites reached is
thus approximately
\[
\sum_{k\geq 0}\Bigl(\tfrac{\lambda}{\lambda+1}\Bigr)^{k}
\;=\; \lambda+1,
\]
and by symmetry the left-side front contributes the same. Taken
together, the total expected range is $\approx 2(\lambda+1)\approx 2\lambda$
for large $\lambda$, which matches the observed slope.

The simulations thus confirm Theorem~\ref{prop:MToZ} numerically: the rumor
died in every run, and the measured mean range tracks the exact value
$2(\lambda+1)$, whose leading term is $2\lambda$.

\subsubsection{The \CRS{} model: a quadratic phase boundary} \label{sec:crs-quadratic}

Table~\ref{tab:crs} lists the estimated critical values $\lc^{*}(\alpha)$
obtained by polynomial extrapolation of $1/\E[N]$, for thirteen values of
$\alpha$ ranging from $0$ to $20$. The corresponding phase diagram is
shown in Figure~\ref{fig:crs_phase}.

\begin{table}[h]
\centering
\caption{Estimated critical values $\lc^*(\alpha)$ for the \CRS{} model on $\Z$, and the ratio $\lc^*(\alpha)/\alpha^2$ for $\alpha\geq 1$.}
\label{tab:crs}
\begin{tabular}{rrrr}
\toprule
$\alpha$ & $\lc^{*}$ & $\lc^*-\lc^*(0)$ & $\lc^*/\alpha^2$ \\
\midrule
$0$    & $1.65$ & $0.00$ & --     \\
$0.2$  & $2.5$  & $0.85$ & --     \\
$0.25$ & $2.66$ & $1.01$ & --     \\
$0.3$  & $2.9$  & $1.25$ & --     \\
$0.4$  & $3.6$  & $1.95$ & --     \\
$0.5$  & $4.06$ & $2.41$ & --     \\
$0.6$  & $4.7$  & $3.05$ & --     \\
$1$    & $8.15$ & $6.50$ & $8.15$ \\
$2$    & $25.5$ & $23.9$ & $6.38$ \\
$3$    & $62$   & $60.4$ & $6.89$ \\
$5$    & $191$  & $189$  & $7.64$ \\
$10$   & $713$  & $711$  & $7.13$ \\
$20$   & $2750$ & $2748$ & $6.87$ \\
\bottomrule
\end{tabular}
\end{table}

For $\alpha \geq 1$ the ratio $\lc^*(\alpha)/\alpha^2$ clusters in a narrow band around 7, and a least-squares fit of the form $\lc^*(\alpha) = \lc^*(0) + c\,\alpha^2$ to the $\alpha \geq 1$ points yields $c \approx 7$. All of the sub-unit $\alpha$ values lie above this law, and increasingly so as $\alpha$ decreases: for example $\lc^*(0.6) \approx 4.7$ versus the predicted $1.65 + 7(0.6)^2 \approx 4.17$, and $\lc^*(0.25) \approx 2.66$ versus $2.09$. The log-log plot in Figure~\ref{fig:crs_loglog} shows this directly: the slope-2 reference line tracks the data for $\alpha \geq 1$, while the smallest-$\alpha$ points lie above it. This is consistent with the quadratic scaling being an $\alpha \to \infty$ asymptotic rather than an all-$\alpha$ fit. We therefore propose:

\begin{conjecture}[Quadratic phase boundary]\label{conj:main}
For the \CRS{} model on $\Z$, the critical transmission rate satisfies
\[
\lc^*(\alpha) \;=\; \lc^*(0) + c\,\alpha^{2} + o(\alpha^{2})\qquad
(\alpha\to\infty),
\]
with $\lc^*(0)\approx 1.6489$ the critical value of the one-dimensional
contact process and $c\approx 7$.
\end{conjecture}

\begin{figure}[h]
\centering
\begin{tikzpicture}
\begin{axis}[
  width=10cm, height=6.5cm,
  xlabel={$\alpha$}, ylabel={$\lc^*(\alpha)$},
  xmin=0, xmax=21, ymin=0, ymax=3000,
  grid=both, grid style={line width=.1pt, draw=gray!20},
  legend pos=north west,
  title={\CRS{} phase diagram on $\Z$}
]
\addplot+[only marks, mark=*, mark size=2.5pt, color=blue]
  coordinates {
    (0,1.65) (0.25,2.66) (0.5,4.06) (1,8.15) (2,25.5)
    (3,62) (5,191) (10,713) (20,2750)
  };
\addlegendentry{Simulation}
\addplot[domain=0:20, samples=100, red, thick]
  {1.65 + 7*x^2};
\addlegendentry{$1.65 + 7\alpha^2$}
\end{axis}
\end{tikzpicture}
\caption{Phase diagram of the \CRS{} model on $\Z$. Blue markers are
Monte Carlo estimates of $\lc^*(\alpha)$; the red curve is the proposed
quadratic law of Conjecture~\ref{conj:main}.}
\label{fig:crs_phase}
\end{figure}

\begin{figure}[h]
\centering
\begin{tikzpicture}
\begin{loglogaxis}[
  width=10cm, height=6.5cm,
  xlabel={$\alpha$}, ylabel={$\lc^*(\alpha)-\lc^*(0)$},
  xmin=0.2, xmax=25, ymin=0.5, ymax=5000,
  grid=both, grid style={line width=.1pt, draw=gray!20},
  legend pos=north west,
  title={Log-log view of the \CRS{} phase boundary}
]
\addplot+[only marks, mark=*, mark size=2.5pt, color=blue]
  coordinates {
    (0.25,1.01) (0.5,2.41) (1,6.50) (2,23.9)
    (3,60.4) (5,189) (10,711) (20,2748)
  };
\addlegendentry{Simulation}
\addplot[domain=0.2:25, samples=50, red, thick] {7*x^2};
\addlegendentry{$7\alpha^2$}
\end{loglogaxis}
\end{tikzpicture}
\caption{Log-log plot of $\lc^*(\alpha)-\lc^*(0)$ versus $\alpha$ for the
\CRS{} model. The slope-$2$ reference line $7\alpha^{2}$ tracks the data for $\alpha \geq 1$; the smallest $\alpha$ points lie above it, consistent with the law being asymptotic.}
\label{fig:crs_loglog}
\end{figure}

\paragraph{Cross-validation via $\E[T]$.}
Applying the same polynomial-extrapolation procedure to $1/\E[T]$
yields critical-value estimates consistent with those from
$1/\E[N]$ at all $\alpha$ for which both observables are usable.
The $\E[T]$ estimator is noisier near criticality, and for very large
$\alpha$ (where the origin spends only brief intervals in state~1)
it becomes unreliable; we therefore use $\E[N]$ as our primary estimator and report $\E[T]$-based values as a consistency check.

\paragraph{A heuristic for the quadratic law.}
The mean-field theory fails to capture the $\alpha$-dependence of $\lc^*$, so the mechanism driving the quadratic law must be spatial. We offer the following heuristic at the level of a single spreading event.

Consider a spreader at site $x$ that has just infected its previously
ignorant neighbor $y=x\pm 1$. Immediately after this event both $x$ and
$y$ are spreaders, each with the other as a spreader neighbor. At the
resulting spreader--spreader edge, three competing transitions may fire
next:

\begin{itemize}
\item $x$ (or $y$) advances the front by infecting a new ignorant neighbor,
      at rate $\lambda$;
\item $x$ is stifled by $y$ (or vice versa), at rate $\alpha$;
\item $x$ or $y$ forgets the rumor, at rate~$1$.
\end{itemize}

For the front to continue advancing, the spread event must occur before
either of the two sites is removed from state~1. For $\alpha\gg 1$ the
probability of stifling dominates over forgetting, and the probability
of advancing the front before stifling at a single edge is approximately
$\lambda/(2\alpha+\lambda)\sim \lambda/(2\alpha)$. Crucially, both sites
$x$ and $y$ are simultaneously at risk: stifling is a \emph{two-body}
event in which the newly formed parent--child pair can extinguish each
other. The damage done per unit of ``activity'' therefore scales as
$\alpha\cdot\alpha=\alpha^{2}$, and for the rumor to survive the
spreading rate must compensate, giving $\lc^*\sim c\,\alpha^{2}$.

A rigorous implementation of this intuition would presumably proceed
by refining the oriented-percolation comparison in
\cite[Theorem~2.2(a)]{CRS2012} to track the quadratic cost. We leave this
as an open problem.

\subsubsection{The \MT{} model: a rapidly growing critical line}

For the \MT{} variant, stiflers contribute to stifling on the same
footing as spreaders. Even though stiflers still forget at rate~$1$, the
\emph{stifler wake} left behind the front creates back-pressure on the
leading spreader. Intuitively, as $\alpha$ grows the wake clears more
slowly relative to its rate of creation, and the critical transmission
rate $\lc(\alpha)$ must grow rapidly in order for the front to advance
through the wake.

Numerically, the 14 measured data points are consistent with a phase
transition at every value of $\alpha$ we could resolve, ranging from
$\alpha=0.10$ up to $\alpha=0.85$, with estimated $\lambda_c$ rising from $\approx 2.1$ to $\approx 430$ over that interval (Table~\ref{tab:mt} and Figure~\ref{fig:mt-phase}); Figure~\ref{fig:mt_invEN} illustrates the underlying $1/\E[N]$ extrapolation for four representative values of $\alpha$.
A least-squares fit of these points to the two-parameter family
\begin{equation} \label{eq:mt-fit}
\lambda_c(\alpha) = \lambda_c(0) + c \cdot \dfrac{\alpha}{(1-\alpha)^{q}}
\end{equation}
yields $c \approx 2.8$ and $q \approx 2.65$. The fit reproduces the steep,
faster-than-quadratic growth of $\lambda_c(\alpha)$ across the measured range,
with relative deviations of up to about $20\%$. Because the extrapolated
critical values carry substantial, only partially quantified uncertainty,
particularly at the largest $\alpha$, we regard~\eqref{eq:mt-fit} as an
empirical interpolation within our measurement window rather than a tested
scaling law. At $\alpha=0.9$ we were able to push to
$\lambda=850$ with $\E[N]/\lambda$ climbing from $\approx 16$ at
$\lambda=500$ to $\approx 26$ at $\lambda=850$, consistent with an
approach to a critical value near the formula prediction
$\lc(0.9)\approx 1126$ but not conclusive on its own. This persistence of the transition at $\alpha=0.9$, at a large but finite rate, is consistent with Theorem~\ref{thm:phase}.

The steep rise of $\lc(\alpha)$ as $\alpha$ approaches $1$ admits a simple
rate-balance reading: a spreader is converted into a (still-stifling) stifler at
rate $\alpha$ per informed neighbor, while informed sites are cleared by forgetting
at rate $1$, so near $\alpha=1$ stifler creation begins to outpace forgetting, the
wake clears ever more slowly, and $\lc(\alpha)$ rises sharply. We
stress that the mean-field system~\eqref{eq:MT-ODE} does not itself single out
$\alpha=1$: its rumor-free equilibrium is unstable iff $\lambda>1$ for every
$\alpha$, so the threshold here is a spatial effect rather than a mean-field
prediction.

\begin{table}[h]
\centering
\caption{Estimated critical values $\lc(\alpha)$ for the \MT{} model on $\Z$.
The value at $\alpha=0.9$ marked $^{\dagger}$ is an extrapolation from the
fit (\ref{eq:mt-fit}); the measurement at $\lambda=850$ shows
$\E[N]/\lambda\approx 26$ and is consistent with a nearby critical value.}
\label{tab:mt}
\small
\begin{tabular}{l*{14}{c}}
\toprule
$\alpha$    & $0.10$ & $0.15$ & $0.20$ & $0.25$ & $0.30$ & $0.35$ & $0.40$
            & $0.45$ & $0.50$ & $0.60$ & $0.70$ & $0.75$ & $0.80$ & $0.85$ \\
\midrule
$\lc^{\MT}$ & $2.1$ & $2.4$ & $2.8$ & $3.2$ & $3.7$ & $4.5$ & $5.5$
            & $7$ & $9$ & $17$ & $45$ & $80$ & $180$ & $430$ \\
\bottomrule
\end{tabular}
\vspace{0.5em}\\
Formula extrapolation: $\lc(0.9)\approx 1126^{\dagger}$,
$\lc(0.95)\approx 7450^{\dagger}$.
\end{table}

\begin{figure}[h]
\centering
\begin{tikzpicture}
\begin{axis}[
  width=11cm, height=7cm,
  xlabel={$\alpha$}, ylabel={$\lc(\alpha)$},
  xmin=0, xmax=1, ymin=0, ymax=1400,
  grid=both, grid style={line width=.1pt, draw=gray!20},
  legend pos=north west,
  title={\MT{} phase boundary on $\Z$}
]
\addplot+[only marks, mark=*, mark size=2.5pt, color=blue]
  coordinates {
    (0.10,2.1) (0.15,2.4) (0.20,2.8) (0.25,3.2) (0.30,3.7)
    (0.35,4.5) (0.40,5.5) (0.45,7) (0.50,9) (0.60,17)
    (0.70,45) (0.75,80) (0.80,180) (0.85,430)
  };
\addlegendentry{Simulation}
\addplot[domain=0:0.93, samples=200, red, thick]
  {1.65 + 2.8*x/(1-x)^2.65};
\addlegendentry{$1.65 + 2.8\,\alpha/(1-\alpha)^{2.65}$}
\addplot+[only marks, mark=triangle*, mark size=3pt, color=red!70!black,
          forget plot]
  coordinates { (0.90, 1126) };
\end{axis}
\end{tikzpicture}
\caption{Phase diagram of the \MT{} model on $\Z$. Blue circles are
the 14 measured values of $\lc(\alpha)$ from Table~\ref{tab:mt}; the red
curve is the least-squares fit of (\ref{eq:mt-fit}) with $c\approx 2.8$
and $q\approx 2.65$, growing steeply as $\alpha\uparrow 1$. The red triangle
at $\alpha=0.9$ marks the fit extrapolation $\lc(0.9)\approx 1126$. By
Theorem~\ref{thm:phase} the critical line stays finite for all $\alpha>0$, so the
fit is a local interpolation and the steepening near $\alpha=1$ is not a genuine divergence.}
\label{fig:mt-phase}
\end{figure}

\begin{figure}[t]
\centering
\includegraphics[width=\textwidth]{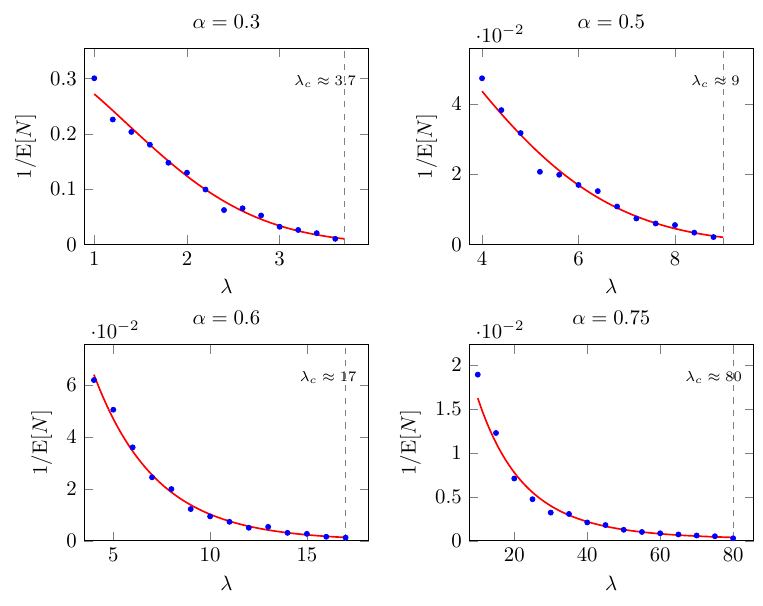}
\caption{Estimation of the \MT{} critical transmission rate by polynomial extrapolation of
$1/\E[N]$ (Section~\ref{sec:method}). Each panel shows the measured $1/\E[N]$ (blue markers)
against $\lambda$ at a fixed stifling rate $\alpha$ ($n=100$ runs per $\lambda$); the solid curve
is a smooth guide to the eye and the dashed line marks the estimated critical value $\lc(\alpha)$
at which $1/\E[N]$ extrapolates to zero. The estimates span more than an order of magnitude, from
$\lc\approx 3.7$ at $\alpha=0.3$ to $\lc\approx 80$ at $\alpha=0.75$, in agreement with Table~\ref{tab:mt}.}
\label{fig:mt_invEN}
\end{figure}

\begin{conjecture}[\MT{} model]\label{conj:mt}
For the \MT{} model on $\Z$, the critical transmission rate $\lc(\alpha)$, finite
for every $\alpha>0$ by Theorem~\ref{thm:phase}, is increasing in $\alpha$ and
satisfies $\lc(\alpha)\to\infty$ as $\alpha\to\infty$, growing far faster than the
quadratic \CRS{} law. Over the measured range $0.1\le\alpha\le 0.85$ it is well
described by the empirical interpolation
\[
\lc(\alpha)\;\approx\;\lc(0)+c\,\frac{\alpha}{(1-\alpha)^{q}},\qquad
c\approx 2.8,\ q\approx 2.65 .
\]
The sharp steepening near $\alpha=1$ reflects the rate-balance crossover described
above, not a singularity: by Theorem~\ref{thm:phase} the critical line stays finite
at and beyond $\alpha=1$, so the apparent pole of the fit at $\alpha=1$ is a feature
of the interpolation within our measurement window rather than a genuine divergence.
\end{conjecture}

Having estimated both critical lines, we can compare them directly. Figure~\ref{fig:crs_vs_mt}
overlays $\lambda^*_c(\alpha)$ for the CRS model and $\lambda_c(\alpha)$ for the MT model over
the common range $\alpha \in [0, 0.6]$. Both start from the contact-process value
$\lambda^{cp}_c \approx 1.6489$ at $\alpha = 0$, and the CRS line lies below the MT line at every $\alpha$ we measured, with the gap widening as $\alpha$ increases. This is the numerical evidence for the ordering $\lambda^*_c(\alpha) \le \lambda_c(\alpha)$ conjectured in Section~\ref{sec:theorems}.
\begin{figure}[t]
\centering
\includegraphics[width=0.75\textwidth]{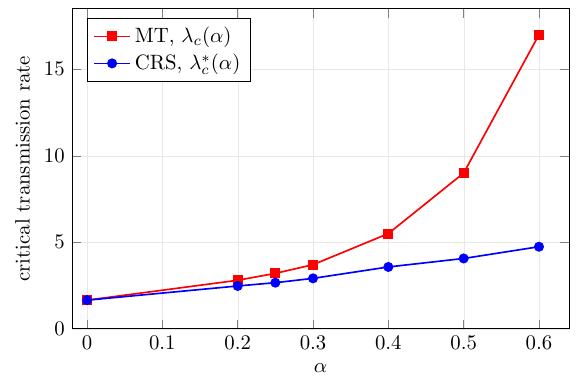}
\caption{Estimated critical transmission rates of the \CRS{} model ($\lambda_c^{*}(\alpha)$, blue) and the \MT{} model ($\lambda_c(\alpha)$, red) on $\mathbb{Z}$ over $\alpha\in[0,0.6]$. Both lines start
from the common contact-process value $\lambda_c^{cp}\approx 1.6489$ at $\alpha=0$; the CRS line lies
below the MT line at every $\alpha$, and the gap widens as $\alpha$ increases. This is the numerical
evidence for the ordering $\lambda_c^{*}(\alpha)\le\lambda_c(\alpha)$ conjectured in Section~\ref{sec:theorems}}
\label{fig:crs_vs_mt}
\end{figure}

\subsection{Survival probability as a function of $\lambda$}\label{sec:survival}

The definition of $\lc(\alpha)$ as the value of $\lambda$ at which $\E[N]$ diverges implicitly assumes that the survival probability $\theta(\lambda,\alpha)$ is non-decreasing in $\lambda$ for
each fixed $\alpha$. For the basic contact process this is a classical consequence of Harris's
coupling \cite{Liggett1999}; for the \CRS{} and \MT{} models the same property is expected but,
to our knowledge, has not been rigorously established.

As an empirical estimate of $\theta(\lambda,\alpha)$ we use the fraction of independent runs that
reach $N \geq N_{\mathrm{thr}} = 500$ distinct visited sites before extinction or the event cap.
We performed $n = 200$ runs at each of twelve $\lambda$-values, for both the \CRS{} and \MT{}
models at $\alpha = 0.5$ (Figure~\ref{fig:survival}). The estimate is non-decreasing in $\lambda$
at every consecutive pair of grid points for both models, providing numerical support for the
expected monotonicity, and the smooth sigmoidal shape is consistent with a phase transition at a
finite $\lc(0.5)$ for each model.

\begin{figure}[t]
\centering
\includegraphics[width=0.8\textwidth]{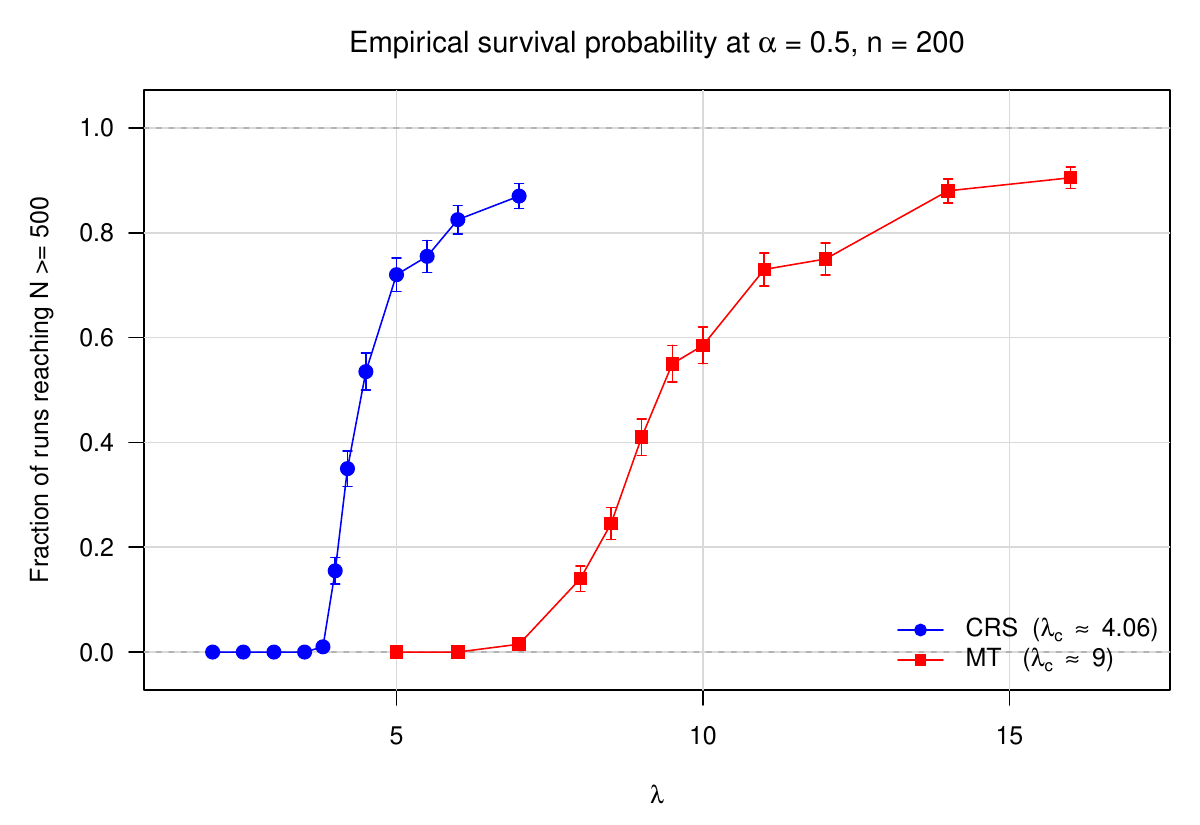}
\caption{Empirical survival probability $\widehat\theta(\lambda,\alpha)$ for the \CRS{} (blue)
and \MT{} (red) models at $\alpha = 0.5$, defined as the fraction of $n = 200$ runs that reach
$N_{\mathrm{thr}} = 500$ distinct visited sites before extinction or the event cap. Error bars are
one binomial standard error. Both curves are non-decreasing in $\lambda$ at every consecutive pair
of grid points, consistent with the expected (but not yet rigorously established) monotonicity of
$\theta(\lambda,\alpha)$ in $\lambda$, and with a phase transition at a finite $\lc$ for each model.}
\label{fig:survival}
\end{figure}

\section{Discussion}\label{sec:discussion}
The results obtained in this work highlight the fundamental role played by the interaction between spatial structure and local rumor dynamics in determining the long-term behavior of stochastic spreading processes. Although the three models considered here share the same underlying transmission mechanism, they exhibit remarkably different qualitative behaviors. In particular, the presence or absence of mechanisms that remove active spreaders from the rumor front leads to substantially different phase diagrams. The classical Maki--Thompson model becomes extinct for every transmission rate, whereas the introduction of forgetting mechanisms gives rise to genuine phase transitions between extinction and survival. This contrast demonstrates how subtle modifications of local interaction rules can produce significant changes in the global behavior of spatial stochastic systems.

Our analysis shows that, in one dimension, the linear geometry of the lattice imposes strong constraints on the propagation of the rumor, and the interactions occurring near the expanding boundary play a crucial role in determining whether the process can sustain long-range growth. For the \CRS{} model, the quadratic growth of the critical transmission rate for large stifling rates suggests that the main obstruction to survival arises from repeated stifling events occurring at the rumor frontier. The heuristic argument based on parent--child interactions indicates that this suppression mechanism is not merely a local phenomenon but is amplified by the spatial organization of the process, leading to the observed scaling behavior.

The comparison between the \CRS{} and \MT{} models further illustrates that the precise formulation of the stifling mechanism has a decisive impact on the critical behavior. Although both models incorporate forgetting and admit survival for sufficiently large transmission rates, their critical curves exhibit distinct asymptotic regimes. In the \MT{} model, the steep growth of the critical transmission rate as $\alpha$ approaches one reflects the increasing difficulty of maintaining a population of active spreaders when stifling dominates the dynamics. These observations emphasize that the microscopic details of the interaction rules, rather than only macroscopic parameters such as transmission and forgetting rates, are essential in determining the emergence of phase transitions.

Several interesting questions remain open even in the one-dimensional setting considered here. A natural direction is the rigorous investigation of the monotonicity properties of survival probabilities with respect to the model parameters, as well as a more detailed comparison between the different rumor mechanisms. Establishing ordering relations between the models and determining sharper asymptotic properties of the critical curves represent challenging mathematical problems that may provide further insight into the role of local interactions in spatial spreading processes.

Beyond one dimension, the present results suggest several avenues for future research. Extending these models to higher-dimensional lattices would allow the investigation of how the geometry of the rumor frontier influences critical behavior and survival mechanisms. Similarly, the study of tree-like structures and heterogeneous networks may reveal how branching effects and complex connectivity patterns compete with local extinction mechanisms. Such extensions could provide a bridge between rigorous results for spatial stochastic systems and the analysis of information spreading phenomena in realistic networks.

Finally, the numerical results obtained in this work illustrate the value of combining rigorous probabilistic methods with computational approaches in the study of critical phenomena. Exact continuous-time simulations provide access to parameter regimes where analytical techniques become challenging, while theoretical arguments help identify the mechanisms underlying the observed scaling laws. This interplay between analysis and computation offers a promising framework for further investigations of phase transitions in stochastic models of information spreading.


\section*{Acknowledgments}

This work has been developed with the support of Conselho Nacional de Desenvolvimento Científico e Tecnológico - CNPq (Grant 316121/2023-1), and Fundação de Amparo à Pesquisa do Estado de São Paulo - FAPESP (Grant 23/13453-5). This work was also supported by the National Institute of Science and Technology in Stochastic Modelling and Complexity (INCT-NUMEC), with funding from CNPq (grant no. 408590/2024-6).


\end{document}